\let\csname equation*\endcsname \@undefined{}
\let\csname endequation*\endcsname \@undefined{}
\definecolor{shadecolor}{rgb}{0.95,0.95,0.95}
\newcommand{\todo}[2][]{}
\newcommand{\code}[1]{\mathtt{#1}}
\definecolor{superlightgray}{gray}{0.80}
\newenvironment{ruledfigure}[1][!hbtp]{%
	\begin{figure}[#1]%
		\begin{center}%
		\noindent\rule{\textwidth}{1pt}%
		\small
}{%
	\noindent\rule{\textwidth}{1pt}%
	\end{center}%
	\end{figure}%
}
\newtcbox{\emphbox}{nobeforeafter,colframe=shadecolor,colback=shadecolor,boxrule=0pt,arc=4pt, boxsep=0pt,left=3pt,right=3pt,top=3pt,bottom=3pt,tcbox raise base}
\newtcbox{\emphunderlinebox}{nobeforeafter,colframe=shadecolor,colback=shadecolor,boxrule=0pt,arc=4pt, boxsep=0pt,left=3pt,right=3pt,top=3pt,bottom=1pt,tcbox raise base}
\newcommand{\defeq}{\stackrel{\text{def}}{=}}
\newcommand{\tyA}{\mathit{A}}
\newcommand{\tyB}{\mathit{B}}
\newcommand{\ctyC}{\underline{C}}
\newcommand{\ctyD}{\underline{D}}
\newcommand{\sig}{\Sigma}
\newcommand{\tmplT}{\mathit{T}}
\newcommand{\eqE}{\mathcal{E}}
\newcommand{\hcases}{\mathit{h}}
\newcommand{\contin}{\mathit{k}}
\newcommand{\continvar}{f}
\newcommand{\logic}{\ensuremath{\mathcal{L}}}
\newcommand{\fulllogic}{\ensuremath{\mathcal{L}_\text{full}}}
\newcommand{\freelogic}{\ensuremath{\mathcal{L}_\text{free}}}
\newcommand{\eqlogic}{\ensuremath{\mathcal{L}_\text{eq}}}
\newcommand{\emptylogic}{\ensuremath{\mathcal{L}_\emptyset}}
\newcommand{\predlogic}{\ensuremath{\mathcal{L}_\text{pred}}}
\newcommand{\unit}{\mathtt{(\,)}}
\newcommand{\true}{\mathtt{true}}
\newcommand{\false}{\mathtt{false}}
\newcommand{\keyin}{\mathtt{in}}
\newcommand{\keyreturn}{\mathtt{ret}}
\newcommand{\keydo}{\mathtt{do}}
\newcommand{\keyfun}{\mathtt{fun}}
\newcommand{\keyhandle}{\mathtt{handle}}
\newcommand{\keyhandler}{\mathtt{handler}}
\newcommand{\keywith}{\mathtt{with}}
\newcommand{\keyif}{\mathtt{if}}
\newcommand{\keythen}{\mathtt{then}}
\newcommand{\keyelse}{\mathtt{else}}
\newcommand{\op}{\mathit{op}}
\newcommand{\fun}[2]{\keyfun~{#1} \mapsto{#2}}
\newcommand{\apply}[2]{{#1}\;{#2}}
\newcommand{\return}[1]{\keyreturn~{#1}}
\newcommand{\dobind}[3]{\keydo~{#1} \leftarrow{#2}~\keyin~{#3}}
\newcommand{\betterdobind}[2]{\keydo~{#1} \leftarrow{#2}~\keyin~}
\newcommand{\ifclause}[3]{\keyif~{#1}~\keythen~{#2}~\keyelse~{#3}}
\newcommand{\operation}[3][\op]{
	\ifthenelse{\equal{#3}{}}
	{\ifthenelse{\equal{#2}{}}
		{#1}
		{#1 (#2)}}
	{#1 (#2;\,#3)}}
\newcommand{\withhandle}[2]{\keywith~{#1}~\keyhandle~{#2}}
\newcommand{\emptyhcases}{\nil}
\newcommand{\handler}[2][\return x \mapsto c_r]{\keyhandler \;({#1}; {#2})}
\newcommand{\subs}[2]{{#1}[{#2}]}
\newcommand{\subsfor}{/}
\newcommand{\rel}{\sim}
\newcommand{\teqjudgement}[2]{{#1}\vdash {#2}}
\newcommand{\tvars}{\mathtt{Z}}
\newcommand{\tenvlong}[2]{{#1};{#2}}
\newcommand{\tenv}{\tenvlong{\ctx}{\tvars}}
\newcommand{\thandle}[3][h]{{#2^#1} [{#3}]}
\newcommand{\respects}[5][\ctx]{#1 \vdash #2 \oftype #4 \casesto #5 \;\mathsf{respects}\; #3}
\newcommand{\bang}{!}
\newcommand{\witheq}{/}
\newcommand{\ctype}[3]{
	\ifthenelse{\equal{#3}{}}
	{#1\bang#2}
	{#1\bang#2\witheq#3}}
\newcommand{\ctypebasic}{\ctype{\tyA}{\sig}{\eqE}}
\newcommand{\set}[1]{\{{#1}\}}
\newcommand{\booltype}{\mathtt{bool}}
\newcommand{\inttype} {\mathtt{int}}
\newcommand{\unittype}{\mathtt{unit}}
\newcommand{\anytype}{\mathtt{*}}
\newcommand{\superimpose}[2]{%
	{\ooalign{{\raise.2ex\hbox{$#1\@firstoftwo#2$}}\cr\hfil\raise-.2ex\hbox{$#1\@secondoftwo#2$}\hfil\cr}}}
\newcommand{\casesto}{\mathrel{\mathpalette\superimpose{{\rightharpoonup}{\rightharpoondown}}}}
\newcommand{\hto}{\Rightarrow}
\newcommand{\wfctx}[1]{\vdash #1 \oftype \mathsf{ctx}}
\newcommand{\wftctx}[1]{\vdash #1 \oftype \mathsf{tctx}}
\newcommand{\wfvtype}[1]{\vdash #1 \oftype \mathsf{vtype}}
\newcommand{\wfctype}[1]{\vdash #1 \oftype \mathsf{ctype}}
\newcommand{\wfsig}[1]{\vdash #1 \oftype \mathsf{sig}}
\newcommand{\wftemplate}[3][\tenv]{#1 \vdash #2 \oftype #3}
\newcommand{\wfequations}[2]{\vdash #1 \oftype #2}
\newcommand{\welltyped}[3][\ctx]{#1 \vdash #2 \oftype #3}
\newcommand{\oftype}{\!:\!}
\newcommand{\judgement}[3]{{#1} \vdash {#2} \oftype {#3}}
\newcommand{\subtypeof}{\leq}
\newcommand{\nil}{\emptyset}
\newcommand{\smallstep}[2]{{#1} \leadsto {#2}}
\newcommand{\lequal}{\equiv}
\newcommand{\eqjudg}[4][\ctx]{#1 \vdash #2 \lequal_{#4} #3}
\newcommand{\ljudgement}[3]{
\ifthenelse{\equal{#2}{}}
	{#1 \vdash #3}
	{{#1} \vsep {#2} \vdash {#3}}}
\newcommand{\hypo}{\Psi}
\newcommand{\ltrue}{\top}
\newcommand{\lfalse}{\bot}
\renewcommand{\land}{\wedge}
\renewcommand{\lor}{\vee}
\newcommand{\limplies}{\Rightarrow}
\newcommand{\lall}{\fall}
\newcommand{\lexists}{\exs}
\newcommand{\lift}[2][]{\mathsf{lift}_{#1} #2}
\newcommand{\interp}[2]{\mathsf{interp}_{#1} (#2)}
\newcommand{\freeInterp}[2]{F_{#1, #2}}
\newcommand{\inval}{\text{in}_{\keyreturn}}
\newcommand{\inop}[1][op]{\text{in}_{\mathit{#1}}}
\newcommand{\semTrue}{\mathrm{t\!t}}
\newcommand{\semFalse}{\mathrm{ff}}
\newcommand{\ctx}{\Gamma}
\newcommand{\semcases}{\tilde{h}}
\newcommand{\sem}[1]{\llbracket{#1} \rrbracket}
\newcommand{\jsem}[3]{\llbracket\judgement{#1}{#2}{#3} \rrbracket}
\newcommand{\hsem}[2][H]{\sem{#2}^{#1}}
\newcommand{\CEhole}[1]{\langle #1 \rangle}
\newcommand{\CEinstantiate}[2]{#1 \CEhole{#2}}
\newcommand{\CEvar}{\mathbb{C}}
\newcommand{\ctxequiv}[4][\ctx]{
	\judgement{#1}{#2 \cong #3}{#4}
}
\newcommand{\lambdafun}[3]{\lambda{#1} \in{#2} \,. \,{#3}}
\newcommand{\vsep}{\;|\;}
\newcommand{\union}{\cup}
\newcommand{\exs}[1]{\exists{#1}.\ }
\newcommand{\fall}[1]{\forall{#1}.\ }
\newcommand{\tab}{\hspace{3mm}}
\newenvironment{proof}
  {\emph{Proof.}}
  {\hfill$\square$}
\newtheorem{definition}{Definition}[section]
\newtheorem{example}[definition]{Example}
\newtheorem{lemma}[definition]{Lemma}
\newtheorem{theorem}[definition]{Theorem}
\newtheorem{proposition}[definition]{Proposition}
\newtheorem{corollary}[definition]{Corollary}
\begin{document}

\title{Local Algebraic Effect Theories}

\author{%
  {Žiga Lukšič\footnote{This material is based upon work supported by the Air Force Office of
  Scientific Research under award number FA9550-17-1-0326.}}\ \ and Matija Pretnar\footnotemark[1]\\
  University of Ljubljana, Faculty of Mathematics and Physics \\
  Slovenia
}

\date{\texttt{ziga.luksic@fmf.uni-lj.si}, \texttt{matija.pretnar@fmf.uni-lj.si}}


\maketitle

\begin{abstract}
  Algebraic effects are computational effects that can be described with a set of basic operations and equations between them. As many interesting effect handlers do not respect these equations, most approaches assume a trivial theory, sacrificing both reasoning power and safety.
  
  We present an alternative approach where the type system tracks equations that are observed in subparts of the program, yielding a sound and flexible logic, and paving a way for practical optimizations and reasoning tools.
\end{abstract}


\noindent
Algebraic effects are computational effects that can be described by a \emph{signature} of primitive operations and a collection of equations between them~\cite{DBLP:conf/fossacs/PlotkinP01, DBLP:journals/acs/PlotkinP03}, while algebraic effect \emph{handlers} are a generalization of exception handlers to arbitrary algebraic effects~\cite{DBLP:conf/esop/PlotkinP09, DBLP:journals/corr/PlotkinP13}. Even though the early work considered only handlers that respect equations of the effect theory, a considerable amount of useful handlers did not, and the restriction was dropped in most --- though not all~\cite{Ahman:PhDThesis, DBLP:journals/pacmpl/Ahman18} --- of the later work on handlers~\cite{DBLP:conf/icfp/KammarLO13, DBLP:journals/jlp/BauerP15, DBLP:conf/popl/Leijen17, DBLP:journals/pacmpl/BiernackiPPS18}, resulting in a weaker reasoning logic and imprecise specifications.

Our aim is to rectify this by reintroducing effect theories into the type system, tracking equations observed in parts of a program. On one hand, the induced logic allows us to rewrite computations into equivalent ones with respect to the effect theory, while on the other hand, the type system enforces that handlers preserve equivalences, further specifying their behaviour.
After an informal overview in Section~\ref{sec:overview}, we proceed as follows:
\begin{itemize}[topsep=5pt]
  \item
  The syntax of the working language, its operational semantics, and the typing rules are given in Section~\ref{sec:language}.
  \item
  Determining if a handler respects an effect theory is in general undecidable~\cite{DBLP:journals/corr/PlotkinP13}, so there is no canonical way of defining such a judgement. Therefore, the typing rules are given parametric to a reasoning logic, and in Section~\ref{sec:logics}, we present some of the more interesting choices.
  \item
  Since the definition of typing judgements is intertwined with a reasoning logic, we must be careful when defining the denotation of types and terms. Thus, in Section~\ref{sec:type-term-semantics}, we first introduce a set-based denotational semantics that disregards effect theories and prove the expected meta-theoretic properties.
  \item
  Next, in Section~\ref{sec:effect-theory-semantics}, we extend this denotation to templates and effect theories, and describe the necessary conditions for the reasoning logic which ensure its soundness and adequacy.
\end{itemize}
We conclude by discussing related and future work in Section~\ref{sec:conclusion}.

\section{Overview}\label{sec:overview}
\subsection{Algebraic effect handlers}

We assume the reader is vaguely familiar with algebraic effects and handlers, but we will elaborate on some of the more intricate parts. For less accustomed readers, a good place to pick up the basics is the tutorial~\cite{DBLP:journals/entcs/Pretnar15}. The emphasis will be on the newly introduced changes to the type and effect system.

At the core of using algebraic effects is the idea that all impure behaviour arises from calls of primitive operations, e.g.~$\mathit{print}$ for printing a string or $\mathit{raise}$ for raising an exception. Any computation either returns a value or makes an operation call~$\operation{v}{y.c}$ where the value~$v$ is a parameter of the call, while the computation~$c$ is its \emph{continuation}, waiting for a result of~$\op$ to be bound to~$y$.

Suppose we wish to model nondeterminism. For that, we take an operation $\mathit{choose} \oftype \unittype \to \booltype$ that non-deterministically produces a boolean after given the unit value~$\unit$. We can recover a binary non-deterministic choice from the abbreviation:
\[
  c_1 \oplus c_2
  \defeq
  \operation[\mathit{choose}]{\unit}{y. \ifclause{y}{c_1}{c_2}}
\]

Apart from a select few built-in operations (e.g.~printing out to a terminal), an operation by itself has no meaning. Their meaning is instead determined by a handler consisting of a set of operation clauses~$\operation{x}{k} \mapsto c_\op$. For every called operation $\op$, the handler replaces the call with the handling computation~$c_\op$, where the parameter of the call is bound to~$x$ and the continuation is captured in a function, which is recursively handled by the same handler, and bound to~$k$.

A simple example of a non-determinism handler is:
\begin{align*}
  &\mathit{pickLeft} \; = \; \keyhandler \; \{ \\
  &\tab | \; \mathit{choose} (\unit; k) \mapsto \apply{k}{\true}\\
  &\}
\end{align*}
that makes $\mathit{choose}$ constantly pass~$\true$ to the continuation~$k$, forcing $c_1 \oplus c_2$ to always pick $c_1$. Another example is the handler that collects all the results of a non-deterministic computation:
\begin{align*}
  &\mathit{collectToList} \; = \; \keyhandler \; \{ \\
  &\tab | \; \mathit{choose} (\unit; k) \mapsto \\
  &\tab\tab\tab \betterdobind{x_1}{\apply{k}{\true}} \\
  &\tab\tab\tab \betterdobind{x_2}{\apply{k}{\false}} \\
  &\tab\tab\tab \return{(x_1 @ x_2)}\\
  &\tab | \; \return{x} \mapsto \return{[x]} \\
  &\}
\end{align*}
If the handled computation calls the operation~$\mathit{choose}$, the handler passes both possible outcomes to the continuation, collects the respective results into lists $x_1$ and $x_2$, and returns the concatenated list. Additionally, the handler includes a clause stating that if the handled computation returns a value~$x$, the handler should transform it into a computation returning the singleton list~$[x]$.

\subsection{Effect theories}

Even though handlers determine the behaviour of operations, there are nonetheless some properties we expect from effects. These are described by a collection of equations called an \emph{effect theory}. For non-determinism, the theory consists of equations for commutativity, idempotency and associativity of the binary choice operation:
\begin{align}
  z_1 \oplus z_2 &\rel z_2 \oplus z_1, \tag{\textsc{comm}}\label{comm} \\
  z \oplus z &\rel z, \tag{\textsc{idem}}\label{idem} \\
  z_1 \oplus (z_2 \oplus z_3) &\rel (z_1 \oplus z_2) \oplus z_3 \tag{\textsc{assoc}}\label{assoc}
\end{align}
We quickly notice that $\mathit{pickLeft}$ does not respect the first equation by constructing a simple counter-example, for instance let $z_1 = \return{1}$ and $z_2 = \return{2}$. When handling the left side of the equation we obtain the result 1 (the left choice) while for the right side of the equation we get the result 2. Showing that $\mathit{pickLeft}$ respects the last two equations requires some additional tools and is done in Example~\ref{exa:pickLeft}. Similarly, the handler $\mathit{collectToList}$ respects the last equation but not the first two. The above two handlers are just two examples of many computationally interesting handlers that do not respect the usually assumed equations. For this reason, most contemporary work on algebraic effect handlers~\cite{DBLP:conf/icfp/KammarLO13, DBLP:journals/jlp/BauerP15, DBLP:conf/popl/Leijen17, DBLP:journals/pacmpl/BiernackiPPS18} assumes trivial effect theories that contain no equations.

Our proposed solution is to instead annotate computation types with equations that define the desired effect theory. For instance, consider the function
\[
  \mathit{chooseFromList} : A~\code{list} \to A \bang \set{\mathit{choose}}
\]
that takes a list of values of type~$A$ and non-deterministically chooses an element from it. The output type captures not only the type of returned values~$A$, but also the set~$\set{\mathit{choose}}$ of operations that may get called in the process. In our proposed system we further decorate the output type by stating the desired equations:
\[
  \mathit{chooseFromList} \oftype A~\code{list} \to A \bang \set{\mathit{choose}} \witheq \set{\eqref{comm},~\eqref{idem},~\eqref{assoc}}
\]
Now, by knowing only the type of~$\mathit{chooseFromList}$, its users can use induction~(see Example~\ref{exa:yield-proof1}) to show that computations
\[
  \dobind{x_1}{\apply{\mathit{chooseFromList}}{\ell_1}}{(\dobind{x_2}{\apply{\mathit{chooseFromList}}{\ell_2}}{c})}
\]
and
\[
  \dobind{x_2}{\apply{\mathit{chooseFromList}}{\ell_2}}{(\dobind{x_1}{\apply{\mathit{chooseFromList}}{\ell_1}}{c})}
\]
are equivalent at type $A \bang \set{\mathit{choose}} \witheq \set{\eqref{comm},~\eqref{idem},~\eqref{assoc}}$ (but not at $A \bang \set{\mathit{choose}} \witheq \nil$). Function implementers also benefit from the enriched types, as they can make additional assumptions on the observed behaviour. For example, by imposing the equation~\eqref{comm} on the output type, they ensure that the order in which they process the list is insignificant, as it will not be observable on the outside.

Any handler handling a computation with assumed equations must of course respect them. For example, $\mathit{collectToList}$ transforms a computation of type $A \bang \set{\mathit{choose}}$ into a pure computation of type $A~\code{list}$, but respects only~\eqref{assoc}. This is reflected in its type
\[
  \mathit{collectToList} \oftype A \bang \set{\mathit{choose}} \witheq \set{\eqref{assoc}} \hto A~\code{list} \bang \nil \witheq \nil
\]
in which the output computation calls no operations, which in turn leads to no possible equations between them. As the equations in the input type of $\mathit{collectToList}$ do not match those in the result of $\mathit{chooseFromList}$, the type system prohibits us from composing them. On the other hand, we could apply a handler $\mathit{collectToSet}$ that collects all results into a set and thus additionally respects~\eqref{comm} and~\eqref{idem}.

The proposed type system offers greater flexibility over assuming a global effect theory~\cite{DBLP:journals/corr/PlotkinP13}, as in one part of a program, we can assume arbitrary equations that the locally used handlers respect, but still use computationally interesting handlers that respect other equations in a different part.

Another flexibility that the type system offers is transforming one effect theory into another. For example, take an operation~$\mathit{yield} \oftype \inttype \to \unittype$ that is used to model integer generators. Using it, we design a handler that transforms a non-deterministic integer computation into a generator of all possible results:
\begin{align*}
  &\mathit{yieldAll} \; = \; \keyhandler \; \{ \\
  &\tab | \; \operation[\mathit{choose}]{\unit}{k} \mapsto \apply{k}{\true}; \apply{k}{\false} \\
  &\tab | \; \return{x} \mapsto \operation[\mathit{yield}]{x}{\_. \return{\unit}} \\
  &\}
\end{align*}
In the $\mathit{choose}$ clause, we first resume the continuation by passing it $\true$, yielding all outcomes in the process, and repeat the process for $\false$. Whenever a computation returns a value~$x$, we instead call $\mathit{yield}$ with a trivial continuation.

This handler respects none of the non-deterministic equations stated above, so its immediate type is:
\[
  \mathit{yieldAll} \oftype \inttype \bang \set{\mathit{choose}} \witheq \nil \hto \unittype \bang \set{\mathit{yield}} \witheq \nil.
\]
But if we assume that the order of $\mathit{yield}$ calls does not matter:
\begin{equation}
  \operation[\mathit{yield}]{x}{\_. \operation[\mathit{yield}]{y}{\_. c}}
  \rel
  \operation[\mathit{yield}]{y}{\_. \operation[\mathit{yield}]{x}{\_. c}}
  \tag{\textsc{yieldOrder}}
  \label{yieldOrder}
\end{equation}
the handler respects the commutativity of $\mathit{choose}$, and can be given a type
\[
  \mathit{yieldAll} \oftype \inttype \bang \set{\mathit{choose}} \witheq \set{\eqref{comm}} \hto \unittype \bang \set{\mathit{yield}} \witheq \set{\eqref{yieldOrder}},
\]
as we show in Example~\ref{exa:yield-proof1}. If we next have a handler that respects~\eqref{yieldOrder}, for example
\[
  \mathit{sumYieldedValues} \oftype
  \unittype \bang \set{\mathit{yield}} \witheq \set{\eqref{yieldOrder}}
  \hto
  \inttype \bang \nil \witheq \nil,
\]
and a computation~$c$ in which we assume~\eqref{comm}, we can compose them as
\[
  \withhandle{\mathit{sumYieldedValues}}{(\withhandle{\mathit{yieldAll}}{c})},
\]
with the type system ensuring that equations are preserved at appropriate places. Again, we can use the equation~\eqref{comm} and rewrite~$c$ into an equivalent computation, all by knowing only the type and not the exact definition of handlers above it.

While the equations look simple, they are expressive enough to state even more intricate properties. Consider integer generators with an operation $\mathit{next} \oftype \unittype \to \inttype \; \code{ option}$. A call of the operation $\mathit{next}$ either generates the next element $\code{Some}\,n$ of the generator, or returns $\code{None}$ if the generator has finished generating the sequence. We expect that after such a call, all further calls should result in $\code{None}$. We can specify such behaviour with the equation
\[
  \operation[\mathit{next}]{\unit}{y.\ifclause{(y == \code{None})}{\operation[\mathit{next}]{\unit}{y'.\apply{z}{y'}}}{\apply{z}{\code{None}}}}
  \rel 
  \operation[\mathit{next}]{\unit}{y.\apply{z}{\code{None}}}
\]
where the variable $z$ stands for an arbitrary computation, dependent on a value of type $\inttype \; \code{ option}$. To see that this equation describes the desired behaviour, consider any ill-behaved handler that passes $\code{None}$ to the continuation on the first call of $\mathit{next}$ and $\code{Some}\,n$ on the second one. Handling the left hand side first passes $\code{None}$ to $y$, leading to the second call of $\mathit{next}$, which is handled by passing $\code{Some}\,n$ to $y'$ and further on to $z$. Handling the right-hand side, however, resumes by passing $\code{None}$ to $z$, resulting in a different computation. In any other case (if the handler passes $\code{Some}\,n$ in the first call or $\code{None}$ in the second one), both sides proceed by handling $\apply{z}{\code{None}}$.

\section{Language}\label{sec:language}

\subsection{Term Syntax}

Our working calculus (Figure~\ref{fig:term-syntax}) is based on the fine-grain call-by-value~\cite{DBLP:journals/iandc/LevyPT03} approach, which differentiates between pure \emph{values}~$v$ and effectful \emph{computations}~$c$, which might return a value or call an operation. For clarity, we keep the calculus minimal, though it could easily be extended with additional value types such as integers, type sums and products, or recursive function definitions, which we discuss in Section~\ref{sub:future-work}.

\begin{ruledfigure}

\[
	\begin{array}{rrll}
		\text{values}~v
		 & ::=   & x                                                                              & \text{variable}          \\
		 & \vsep & \unit                                                                          & \text{unit constant}     \\
		 & \vsep & \true \vsep \false                                                             & \text{boolean constants} \\
		 & \vsep & \fun{x}{c}                                                                     & \text{function}          \\
		 & \vsep & \handler{\hcases}                                                              & \text{handler}           \\
		 \\
		 \text{computations}~c
		 & ::=   & \ifclause{v}{c_1}{c_2}                                                         & \text{conditional}       \\
		 & \vsep & \apply{v_1}{v_2}                                                               & \text{application}       \\
		 & \vsep & \return{v}                                                                     & \text{returned value}    \\
		 & \vsep & \operation{v}{y.c}                                                             & \text{operation call}    \\
		 & \vsep & \dobind{x}{c_1}{c_2}                                                           & \text{sequencing}        \\
		 & \vsep & \withhandle{v}{c}                                                              & \text{handling}					\\
		 \\
		 \text{operation clauses}~\hcases
		  & ::=   & \emptyhcases \vsep \hcases \union \set{\operation{x}{\contin} \mapsto c_\op}                            \\
		\end{array}
		\]

	\caption{Syntax of terms.}\label{fig:term-syntax}
\end{ruledfigure}

\noindent
For a cleaner development later on, we define an independent syntactic sort of \emph{operation clauses}~$h$ which are joint with a \emph{return clause} only when constructing the handler value. This deviates slightly from most of the contemporary work on handlers~\cite{DBLP:conf/popl/Leijen17, DBLP:conf/icfp/HillerstromL16,DBLP:journals/pacmpl/0002KLP17, DBLP:conf/esop/SalehKPS18} and is more similar to the original treatment in~\cite{DBLP:conf/esop/PlotkinP09, DBLP:journals/corr/PlotkinP13}. In practice, we treat operation clauses~$h$ as a set of operations with uniquely assigned handling computations and write them as $\set{\operation{x}{\contin} \mapsto c_\op}_\op$.

To improve readability we sometimes use syntactic sugar. When sequencing, we replace computations of form $\dobind{\_}{c_1}{c_2}$ with $c_1; c_2$. When writing handler clauses we often use the separator $\vsep$ instead of commas, to achieve the familiar ML pattern matching look.

\subsection{Operational Semantics}

Operational semantics, given in Figure~\ref{fig:operational}, remains largely the same as in our previous work~\cite{DBLP:journals/jlp/BauerP15, DBLP:journals/jfp/KammarP17} except for a different presentation of operation clauses. Computations continue to evaluate until they either return a value or call an operation. At that point we propagate the operation by pushing the remaining parts of the computation inside the operation continuation. This way we ensure that the operation eventually reaches a handler, while at the same time correcting the continuation.

In line with our previous work~\cite{DBLP:journals/jlp/BauerP15, DBLP:journals/jfp/KammarP17}, the presented handlers are \emph{deep}~\cite{DBLP:conf/icfp/KammarLO13}, meaning that they continue to handle any operations called in the potentially resumed continuation. For the sake of a slightly simpler type system, we choose to switch to \emph{closed handlers}~\cite{DBLP:conf/icfp/KammarLO13}, which get stuck on operation calls with no corresponding operation clauses. Our type system will prevent such cases, though it is straightforward to extend the semantics to \emph{open} handlers, where unhandled operations are implicitly propagated outwards, or add propagating cases $\operation{x}{\contin} \mapsto \operation{x}{y. \apply{\contin}{y}}$ that do that explicitly.

\begin{ruledfigure}

\begin{mathpar}
	
	\inferrule*[]
	{
	}
	{ \smallstep
		{\ifclause{\true}{c_1}{c_2}}
		{c_1}
	}
	
	\inferrule*[]
	{
	}
	{ \smallstep
		{\ifclause{\false}{c_1}{c_2}}
		{c_2}
	}
	
	\inferrule*[]
	{
	}
	{ \smallstep
		{\apply{(\fun{x}{c})}{v}}
		{\subs{c}{v\subsfor x}}
	}

	\inferrule*[]
	{\smallstep{c_1}{c_1'}
	}
	{ \smallstep
		{\dobind{x}{c_1}{c_2}}
		{\dobind{x}{c_1'}{c_2}}
	}
	
	\inferrule*[]
	{
	}
	{ \smallstep
		{\dobind{x}{\return{v}}{c}}
		{\subs{c}{v\subsfor x}}
	}
	
	\inferrule*[]
	{
	}
	{ \smallstep
		{\dobind{x}{\operation{v}{y.c_1}}{c_2}}
		{\operation{v}{y.\dobind{x}{c_1}{c_2}}}
	}
	
	
	\inferrule*[]
	{\smallstep{c}{c'}
	}
	{ \smallstep
		{\withhandle{v}{c}}
		{\withhandle{v}{c'}}
	}
	
	\inferrule*[]
	{
	}
	{ \smallstep
		{\withhandle{(\handler\hcases)}{(\return v)}}
		{\subs{c_r}{v \subsfor x}}
	}
	
	\mprset{flushleft}
	\inferrule*[]
	{
		(\operation{x}{\contin} \mapsto c_\op) \in \hcases
	}
	{ \withhandle{(\handler\hcases)}{(\operation{v}{y.c})} \\
	  \smallstep{}{\subs{c_\op}{v \subsfor x, (\fun{y}{\withhandle{(\handler\hcases)}{c}}) \subsfor k}}
	}
	
\end{mathpar}

	\caption{Operational semantics.}\label{fig:operational}
\end{ruledfigure}	

\subsection{Type Syntax}

Continuing the separation between values and computations, the type syntax (Figure~\ref{fig:type-syntax}) distinguishes between \emph{value types}~$\tyA, \tyB$, often referred to simply as types, and \emph{computation types} $\ctyC, \ctyD$. In addition to the type of the returned value $\tyA$, a computation type~$\ctype{\tyA}{\sig}{\eqE}$ captures the signature~$\sig$ of operations that might be called during evaluation and, the main novelty of this paper, a set of equations, called a \emph{(effect) theory}~$\eqE$, which describes which computations we consider equivalent at a given type.

\begin{ruledfigure}

\[
	\begin{array}{rrll}
		\text{(value) type}~\tyA, \tyB
		 & ::=   & \unittype          & \text{unit type}\\
		 & \vsep & \booltype          & \text{boolean type}\\
		 & \vsep & \tyA \to \ctyC     & \text{function type}\\
		 & \vsep & \ctyC \hto \ctyD   & \text{handler type}\\
		\\
		\text{computation type}~\ctyC, \ctyD
		 & ::=   & \ctype{\tyA}{\sig}{\eqE}\\
		\\
		\text{signature}~\sig
		 & ::=   & \nil \vsep \sig \union \set{{\op \oftype \tyA \to \tyB}}\\
		\\
		\text{value context}~\ctx
		 & ::=   & \varepsilon \vsep \ctx, x \oftype \tyA \\
		\\
		\text{template context}~\tvars
		 & ::=   & \varepsilon \vsep \tvars, z \oftype \tyA \to \anytype \\
		\\
		\text{template}~\tmplT
		 & ::=   & \apply{z}{v} & \text{applied template variable}\\
		 & \vsep & \ifclause{v}{\tmplT_1}{\tmplT_2} & \text{conditional template}\\
		 & \vsep & \operation{v}{y.\tmplT} & \text{operation call template}\\
		\\
		\text{(effect) theory}~\eqE
		 & ::=   & \nil \vsep \eqE \union
		\set{\teqjudgement{\tenv}{\tmplT_1 \rel \tmplT_2}}\\
	\end{array}
\]

	\caption{Syntax of types.}\label{fig:type-syntax}
\end{ruledfigure}

\noindent
Operation calls exhibit answer-type polymorphism in the sense that they prescribe only the type that the continuation expects, but not its return type. Equations between operations have to be similarly polymorphic, so we describe them with a pair of \emph{templates}~$\tmplT$~\cite{DBLP:journals/corr/PlotkinP13}. A template pair is then instantiated to a pair of computations by replacing all template variables with appropriate function values. In particular, given a function value $f_j$ for each free template variable $z_j$ appearing in a template~$\tmplT$, we recursively define the computation $\subs{\tmplT}{f_j \subsfor z_j}_j$ by:
\begin{align*}
	\subs{(\apply{z_j}{v})}{f_j \subsfor z_j}_j
	&= \apply{f_j}{v} \\
	\subs{(\ifclause{v}{\tmplT_1}{\tmplT_2})}{f_j \subsfor z_j}_j
	&= \ifclause{v}{\subs{\tmplT_1}{f_j \subsfor z_j}_j}{\subs{\tmplT_2}{f_j \subsfor z_j}_j} \\
	\subs{(\operation{v}{y.\tmplT})}{f_j \subsfor z_j}_j
	&= \operation{v}{y.\subs{\tmplT}{f_j \subsfor z_j}_j}
\end{align*}
Since templates have to be polymorphic in the result type, we have to restrict their constructors to a small answer-type polymorphic subset, though one which proves to be sufficient for many interesting and common effect theories.

\subsection{Type Checking}\label{sub:type-rules}

It should come as no surprise that the proposed type system is intricate. Types contain templates, which then further contain terms and types, and to top it all off, both terms and types are split into mutually dependent value and computation sorts. This forces us to mutually define judgements for:
\begin{itemize}
	\item $\welltyped{v}{\tyA}$, which states that in context $\ctx$ the value $v$ has a type~$\tyA$,
	\item $\welltyped{c}{\ctyC}$, which states that in context $\ctx$ the computation $c$ has a computation type~$\ctyC$,
	\item $\welltyped{\hcases}{\sig \casesto \ctyD}$, which states that in context $\ctx$ clauses $\hcases$ cover operations listed in $\sig$ using computations of type $\ctyD$,
	\item $\respects{\hcases}{\eqE}{\sig}{\ctyD}$, which states that in context $\ctx$ and with respect to the signature $\sig$, clauses $\hcases$ are well-defined, meaning they handle computations equivalent under~$\eqE$ into equivalent computations of type $\ctyD$,
	\item $\wfvtype{\tyA}$, which states that the value type $\tyA$ is well-formed,
	\item $\wfctype{\ctyC}$, which states that the computation type $\ctyC$ is well-formed,
	\item $\wfsig{\sig}$, which states that the signature $\sig$ is well-formed,
	\item $\wfctx{\ctx}$, which states that the value context $\ctx$ is well-formed,
	\item $\wftctx{\tvars}$, which states that the template context $\tvars$ is well-formed,
	\item $\wftemplate{\tmplT}{\sig}$, which states that in contexts $\ctx$ and $\tvars$, the template $\tmplT$ is well-formed with respect to the signature $\sig$,
	\item $\wfequations{\eqE}{\sig}$, which states that equations $\eqE$ are well-formed with respect to the signature~$\sig$.
\end{itemize}	
Even though the forthcoming rules have to be treated as a single definition, we structure them into smaller, more digestible chunks.

\begin{ruledfigure}

\subsubsection*{Well-typed values~\emphbox{$\welltyped{v}{\tyA}$} (where $\wfctx{\ctx}$ and $\wfvtype{\tyA}$)}

\begin{mathpar}

	\inferrule*[]
	{ (x \oftype \tyA) \in \ctx
	}
	{ \judgement{\ctx}{x}{\tyA}
	}

	\inferrule*[]
	{ }
	{ \judgement{\ctx}{\unit}{\unittype}
	}

	\inferrule*[right=]
	{ }
	{ \judgement{\ctx}{\true}{\booltype}
	}

	\inferrule*[]
	{ }
	{ \judgement{\ctx}{\false}{\booltype}
	}

	\inferrule*[]
	{ \judgement{\ctx, x \oftype \tyA}{c}{\ctyC}
	}
	{ \judgement{\ctx}{\fun{x}{c}}{\tyA \to \ctyC}
	}

	\inferrule*[]
	{
		\judgement{\ctx, x \oftype \tyA}{c_r}{\ctyD}
		\\
		\respects{\hcases}{\eqE}{\sig}{\ctyD}
		\\
	}
	{\judgement{\ctx}
		{\handler{\hcases}}
		{\ctype{\tyA}{\sig}{\eqE} \hto \ctyD}
	}
\end{mathpar}

\subsubsection*{Well-typed computations~\emphunderlinebox{$\welltyped{c}{\ctyC}$} (where $\wfctx{\ctx}$ and $\wfctype{\ctyC}$)}

\begin{mathpar}
	\inferrule*[]
	{ \judgement{\ctx}{v}{\booltype}\\
		\judgement{\ctx}{c_1}{\ctyC}\\
		\judgement{\ctx}{c_2}{\ctyC}
	}
	{ \judgement{\ctx}{\ifclause{v}{c_1}{c_2}}{\ctyC}
	}

	\inferrule*[]
	{ \judgement{\ctx}{v_1}{\tyA \to \ctyC}\\
		\judgement{\ctx}{v_2}{\tyA}
	}
	{ \judgement{\ctx}{\apply{v_1}{v_2}}{\ctyC}
	}

	\inferrule*[]
	{ \judgement{\ctx}{v}{\tyA}
	}
	{ \judgement{\ctx}{\return{v}}{\ctype{\tyA}{\sig}{\eqE}}
	}

	\inferrule*[]
	{ (\op \oftype \tyA_\op \to \tyB_\op) \in\sig \\
		\judgement{\ctx}{v}{\tyA_\op}\\
		\judgement{\ctx, y \oftype \tyB_\op}{c}{\ctype{\tyA}{\sig}{\eqE}}
	}
	{ \judgement{\ctx}{\operation{v}{y.c}}{\ctype{\tyA}{\sig}{\eqE}}
	}

	\inferrule*[]
	{ \judgement{\ctx}{c_1}{\ctype{\tyA}{\sig}{\eqE}} \\
		\judgement{\ctx, x \oftype \tyA}{c_2}{\ctype{\tyB}{\sig}{\eqE}} \\
	}
	{ \judgement{\ctx}{\dobind{x}{c_1}{c_2}}{\ctype{\tyB}{\sig}{\eqE}}
	}

	\inferrule*[]
	{ \judgement{\ctx}{v}{\ctyC \hto \ctyD}\\
		\judgement{\ctx}{c}{\ctyC}
	}
	{ \judgement{\ctx}{\withhandle{v}{c}}{\ctyD}
	}

\end{mathpar}

	\caption{Typing judgements for terms.}\label{fig:typing-judgements}
\end{ruledfigure}

\noindent
First, Figure~\ref{fig:typing-judgements} lists the usual typing rules for values and computations. Aside from the decoupling of operation clauses, most of the rules closely follow our previous work~\cite{DBLP:journals/corr/BauerP13, DBLP:journals/entcs/Pretnar15, DBLP:journals/jfp/KammarP17}. The main difference is the addition of equations, which are in all but one case simply tacked onto the computation type. For example in sequencing~$\dobind{x}{c_1}{c_2}$, we require that $c_1$ and $c_2$ have not only matching signatures, but also equations.

The rule for typing handlers is more interesting. If a handler is given a type~$\ctypebasic \hto \ctyD$, we must first check that the return clause maps values of type~$\tyA$ to computations of type~$\ctyD$. Next, all operations in $\sig$ must have appropriate operation clauses that handle them with computations of type~$\ctyD$, well-defined with respect to equations~$\eqE$. All of this is captured by the judgement $\respects{\hcases}{\eqE}{\sig}{\ctyD}$, given in Figure~\ref{fig:operation-clauses-judgements}. There are different choices one can consider in defining this judgement, though all impose the same typing rules, captured by the auxiliary judgement~$\judgement{\ctx}{\hcases}{\sig \casesto {\ctyD}}$. We present a few interesting choices of resulting logics~$\logic$ in Section~\ref{sec:logics}.

\begin{ruledfigure}
	
\subsubsection*{Well-typed operation clauses~\emphunderlinebox{$\welltyped{\hcases}{\sig \casesto \ctyD}$} (where $\wfctx{\ctx}$, $\wfsig{\sig}$ and $\wfctype{\ctyD}$)}

\begin{mathpar}

	\inferrule*[]{	}
	{\judgement{\ctx}
		{\emptyhcases}
		{\nil \casesto {\ctyD}}
	}

	\inferrule*[]
	{\judgement{\ctx}
		{\hcases}
		{\sig \casesto {\ctyD}}\\
		\judgement{\ctx, x \oftype \tyA_\op, \contin \oftype \tyB_\op \to \ctyD}
		{c_\op}
		{\ctyD}\\
		\op \not \in \sig
	}
	{\judgement{\ctx}
		{\hcases \union \set{\operation{x}{\contin} \mapsto c_\op}}
		{(\sig \union \set{\op \oftype \tyA_\op \to \tyB_\op}) \casesto {\ctyD}}
	}

\end{mathpar}

\subsubsection*{Well-defined operation clauses~\emphunderlinebox{$\respects{\hcases}{\eqE}{\sig}{\ctyD}$} (where $\wfequations{\eqE}{\sig}$ and $\welltyped{\hcases}{\sig \casesto \ctyD}$)}

\begin{center}
	Given in Section~\ref{sec:logics}.
\end{center}

	\caption{Typing judgements for operation clauses.}\label{fig:operation-clauses-judgements}
\end{ruledfigure}

\noindent
Rules that ensure well-formedness of types, signatures and contexts are given in Figure~\ref{fig:wellformed} and are routine. The most interesting one is the rule for the computation type $\ctypebasic$, which requires the theory~$\eqE$ to be well-formed with respect to the signature~$\sig$.

We treat templates and equations in Figure~\ref{fig:typing-templates}. The rules for templates follow the corresponding rules for computations, while equations are well-formed with respect to the signature~$\sig$ if all their templates are. Template variables in template contexts~$\tvars$ are labelled with the type $\tyA \to \anytype$ as they can be replaced with functions of type $\tyA \to \ctyC$ for an arbitrary computation type~$\ctyC$.

\begin{theorem}[Safety]\label{theorem:safety}
	\emph{Progress.}
	If $\judgement{}{c}{\ctypebasic}$ then either
	\begin{itemize}[noitemsep, topsep=0pt]
		\item there exists a computation $c'$ such that $\smallstep{c}{c'}$, or
		\item $c$ is of the form $\return v$ for some value $v$, or
		\item $c$ is of the form $\operation{v}{k}$ for some $\op \in \sig$.
	\end{itemize}
	\emph{Preservation.}
	If $\judgement{}{c}{\ctypebasic}$ and $\smallstep{c}{c'}$ then $\judgement{}{c'}{\ctypebasic}$.
\end{theorem}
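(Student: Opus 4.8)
The plan is to follow the standard progress-and-preservation recipe, adapted to the split between values, computations and operation clauses, and to carry out the two parts after first establishing the usual auxiliary lemmas by mutual induction on typing derivations: \emph{inversion} lemmas that read off the premises of the unique applicable rule from the syntactic shape of a term; a \emph{weakening} lemma for extending the value context; and a \emph{substitution} lemma stating that if $\judgement{\ctx, x \oftype \tyA}{c}{\ctyC}$ and $\judgement{\ctx}{v}{\tyA}$ then $\judgement{\ctx}{\subs{c}{v \subsfor x}}{\ctyC}$, together with its value- and clause-level counterparts. A point worth settling first is that these take their classical form here: although a computation type carries a theory $\eqE$, and equations contain templates that in turn mention values, each equation $\teqjudgement{\tenv}{\tmplT_1 \rel \tmplT_2}$ bundles its own local context and $\wfctype{\ctyC}$ is context-free. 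Hence a type never depends on the ambient value context, so substituting a value for an ambient variable never alters a type, and the substitution lemma reduces to the familiar statement.

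For \emph{Progress} I would induct on the derivation of $\judgement{}{c}{\ctypebasic}$, which amounts to a case analysis on the shape of the closed computation $c$. The cases $\return v$ and $\operation{v}{y.c}$ are already in one of the two terminal forms (for the latter, inversion gives $\op \in \sig$). For $\ifclause{v}{c_1}{c_2}$ and $\apply{v_1}{v_2}$ I would invoke a \emph{canonical forms} lemma --- a closed value of type $\booltype$ is $\true$ or $\false$, one of type $\tyA \to \ctyC$ is a $\keyfun$, and one of type $\ctyC \hto \ctyD$ is a $\keyhandler$ --- obtained by inspecting which value rules can conclude each type; the matching reduction then fires. For $\dobind{x}{c_1}{c_2}$ and $\withhandle{v}{c}$ I would apply the induction hypothesis to the subcomputation: either it steps and a congruence rule lifts the step, or it is a $\keyreturn$ and the contraction fires, or it is an operation call that is pushed outward past the binder, or, in the handling case, is caught by a clause. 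The handling case is the only one needing care: by inversion the handler has type $\ctyC \hto \ctyD$ with $\ctyC = \ctypebasic$, so its clauses satisfy $\welltyped{\hcases}{\sig \casesto \ctyD}$, which guarantees a clause for every $\op \in \sig$; hence a handled operation is never stuck.

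For \emph{Preservation} I would induct on the derivation of $\smallstep{c}{c'}$, case-splitting on the reduction rule and using inversion to expose the typing of the redex. The two conditional rules and the two propagation steps are discharged by re-assembling the relevant typing premises (the rule commuting $\keydo$ outward past an operation call additionally uses weakening to move $c_2$ under the binder $y \oftype \tyB_\op$). The three contraction rules --- function application, $\keydo$ on a $\keyreturn$, and the handler return clause --- are exactly instances of the substitution lemma, while the congruence rules for sequencing and handling follow directly from the induction hypothesis and a re-application of the original typing rule.

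The main obstacle, and the one case I would spell out in full, is the handler's operation step, whose contractum $\subs{c_\op}{v \subsfor x, (\fun{y}{\withhandle{(\handler\hcases)}{c}}) \subsfor \contin}$ requires a \emph{simultaneous} substitution of both the operation parameter and the reified continuation. Here I would first check that the captured continuation is well typed at $\tyB_\op \to \ctyD$: from inversion on the handled operation call, $c$ has type $\ctypebasic$ under $y \oftype \tyB_\op$, so $\withhandle{(\handler\hcases)}{c}$ has type $\ctyD$ there and the $\keyfun$ has type $\tyB_\op \to \ctyD$, while $\judgement{\ctx}{v}{\tyA_\op}$. Since the clause is typed by $\judgement{\ctx, x \oftype \tyA_\op, \contin \oftype \tyB_\op \to \ctyD}{c_\op}{\ctyD}$, the substitution lemma yields a result of type $\ctyD$, matching the type of the handling expression. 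Finally I would note that none of this uses the well-definedness judgement $\respects{\hcases}{\eqE}{\sig}{\ctyD}$: safety depends only on the logic-independent coverage judgement $\welltyped{\hcases}{\sig \casesto \ctyD}$, so the theorem holds uniformly for every reasoning logic $\logic$ of Section~\ref{sec:logics}.
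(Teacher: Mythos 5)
Your proposal is correct and is exactly the ``routine structural induction'' that the paper's one-line proof alludes to, with the right auxiliary lemmas (inversion, canonical forms, simultaneous substitution for the handler's operation step) and the right observation that only the coverage judgement $\welltyped{\hcases}{\sig \casesto \ctyD}$, not the logic-dependent $\mathsf{respects}$ judgement, is needed. No discrepancies with the paper's approach.
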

\begin{proof}
	Both parts can be shown with a routine structural induction.
\end{proof}

\begin{ruledfigure}

\subsubsection*{Well-formed value types~\emphbox{$\wfvtype{\tyA}$}}

\begin{mathpar}
	\inferrule*[]{
	}{
		\wfvtype{\unittype}
	}

	\inferrule*[]{
	}{
		\wfvtype{\booltype}
	}

	\inferrule*[]{
		\wfvtype{\tyA} \\
		\wfctype{\ctyC}
	}{
		\wfvtype{\tyA \to \ctyC}
	}

	\inferrule*[]{
		\wfctype{\ctyC} \\
		\wfctype{\ctyD}
	}{
		\wfvtype{\ctyC \hto \ctyD}
	}
\end{mathpar}

\subsubsection*{Well-formed computation types~\emphunderlinebox{$\wfctype{\ctyC}$}}

\begin{mathpar}
	\inferrule*[]{
		\wfvtype{\tyA} \\
		\wfsig{\sig} \\
		\wfequations{\eqE}{\sig}
	}{
		\wfctype{\ctypebasic}
	}
\end{mathpar}

\subsubsection*{Well-formed signatures~\emphbox{$\wfsig{\sig}$}}

\begin{mathpar}
	\inferrule*[]{
	}{
		\wfsig{\nil}
	}

	\inferrule*[]{
		\wfsig{\sig} \\
		\wfvtype{\tyA} \\
		\wfvtype{\tyB} \\
		\op \not\in \sig
	}{
		\wfsig{\sig \union \set{\op \oftype \tyA \to \tyB}}
	}
\end{mathpar}

\subsubsection*{Well-formed context~\emphbox{$\wfctx{\ctx}$} and template context~\emphbox{$\wftctx{\tvars}$}}

\begin{mathpar}
	\inferrule*[]{
	}{
		\wfctx{\nil}
	}

	\inferrule*[]{
		\wfctx{\ctx} \\
		x \not\in \ctx \\
		\wfvtype{\tyA}
	}{
		\wfctx{\ctx, x \oftype \tyA}
	}

	\inferrule*[]{
	}{
		\wftctx{\nil}
	}

	\inferrule*[]{
		\wftctx{\tvars} \\
		z \not\in \tvars \\
		\wfvtype{\tyA}
	}{
		\wftctx{\tvars, z \oftype \tyA \to \anytype}
	}
\end{mathpar}

	\caption{Well-formedness judgements for types.}\label{fig:wellformed}
\end{ruledfigure}	

\begin{ruledfigure}

\subsubsection*{Well-typed templates~\emphbox{$\wftemplate{\tmplT}{\sig}$} (where $\wfctx{\ctx}$, $\wftctx{\tvars}$, and $\wfsig{\sig}$)}

\begin{mathpar}

	\inferrule*[]
	{ \welltyped{v}{\tyA} \\
	  (z \oftype \tyA \to \anytype) \in \tvars
	}
	{ \wftemplate{\apply{z}{v}}{\sig}
	}

	\inferrule*[]
	{ \welltyped{v}{\booltype}\\
	  \wftemplate{\tmplT_1}{\sig} \\
	  \wftemplate{\tmplT_2}{\sig}
	}
	{ \wftemplate{\ifclause{v}{\tmplT_1}{\tmplT_2}}{\sig}
	}

	\inferrule*[]
	{ (\op \oftype \tyA \to \tyB) \in \sig \\
		\welltyped[\ctx]{v}{\tyA} \\
		\wftemplate[\ctx, y \oftype \tyB;\tvars]{\tmplT}{\sig}
	}
	{ \wftemplate{\operation{v}{y.\tmplT}}{\sig}
	}
\end{mathpar}

\subsubsection*{Well-formed theories~\emphbox{$\wfequations{\eqE}{\sig}$} (where $\wfsig{\sig}$)}

\begin{mathpar}
	\inferrule*[]{
	}{
		\wfequations{\nil}{\sig}
	}

	\inferrule*[]{
		\wfequations{\eqE}{\sig} \\
		\wftemplate{\tmplT_1}{\sig} \\
		\wftemplate{\tmplT_2}{\sig}
	}{
		\wfequations{\eqE \union \set{\teqjudgement{\tenv}{\tmplT_1 \rel \tmplT_2}}}{\sig}
	}
\end{mathpar}

	\caption{Typing judgements for templates.}\label{fig:typing-templates}
\end{ruledfigure}

\section{Logics}\label{sec:logics}

We now consider different possibilities for the definition of $\respects{\hcases}{\eqE}{\sig}{\ctyD}$. The trivial choice is to take the empty relation, resulting in a logic~\emptylogic\ in which no handlers can be typed. At the other extreme, we can take the logic~\fulllogic\ with the unwieldy set of all possible combinations of $\welltyped{\hcases}{\sig \casesto \ctyD}$ and $\wfequations{\eqE}{\sig}$ that yield a well-defined denotation~(cf.~Definition~\ref{def:sound-logic}). In between, there are a few interesting logics on which we focus.

\subsection{Free logic~\freelogic}

The simplest one of these is the logic~\freelogic\ in which well-typed operation clauses respect only the empty set of equations.
\[
	\inferrule*[]
	{ \welltyped{\hcases}{\sig \casesto \ctyD} }
	{ \respects{\hcases}{\nil}{\sig}{\ctyD} }
\]
This corresponds to the conventional approach to handlers in which we ignore equations and accept any well-typed handler. Note that we could replace $\nil$ with an arbitrary set of tautologies such as $\teqjudgement{\tenv}{\tmplT \rel \tmplT}$. The same principle applies in general, as we may replace any set of equations with an equivalent one.

\subsection{Equational logic~\eqlogic}\label{sub:equational-logic}

We next consider a simple equational logic~\eqlogic{}, which allows us to prove that operation clauses indeed respect a given theory. In addition to the type rules, which are the same as in Section~\ref{sub:type-rules}, \eqlogic{} includes the $\respects{\hcases}{\eqE}{\sig}{\ctyD}$ relation and typed equational judgements $\eqjudg{v_1}{v_2}{\tyA}$ for values and $\eqjudg{c_1}{c_2}{\ctyC}$ for computations. Most of these additional rules are well-known: reflexivity, symmetry, transitivity, substitution, congruences for each construct, and $\beta\eta$-equivalences. More interesting are the rules given in Figure~\ref{fig:equational_logic}.

\begin{ruledfigure}
\subsubsection*{Inheriting equations from the effect theory}

\begin{mathpar}

	\inferrule*[]
		{ \left(\teqjudgement{\tenvlong
				{(x_i \oftype \tyA_i)_i}
				{(z_j \oftype \tyB_j \to \anytype)_j}
			}{\tmplT_1 \rel \tmplT_2}
		  \right) \in \eqE
		  \\
		  \welltyped{v_i}{\tyA_i}
		  \\
		  \welltyped{f_j}{\tyB_j \to \ctypebasic}
		}
		{ \eqjudg
			{\subs{(\subs{T_1}{f_j \subsfor z_j}_j)}{v_i \subsfor x_i}_i}
			{\subs{(\subs{T_2}{f_j \subsfor z_j}_j)}{v_i \subsfor x_i}_i}
			{\ctypebasic}
		}
																	
\end{mathpar}

\subsubsection*{Checking when a handler respects an equational theory}

\begin{mathpar}

	\inferrule*[]
	{ \welltyped{\hcases}{\sig \casesto \ctyD} }
	{ \respects{\hcases}{\nil}{\sig}{\ctyD} }
	\\

	\inferrule*[]
	{ 
		\respects{\hcases}{\eqE}{\sig}{\ctyD} \\
		\ljudgement{\ctx, (x_i \oftype \tyA_i)_i, (f_j \oftype \tyB_j \to \ctyD)_j}{}
		{\thandle{T_1}{f_j \subsfor z_j}_j 
		\lequal_{\ctyD} 
		\thandle{T_2}{f_j \subsfor z_j}_j}
	}
	{ \respects{\hcases}{\;\eqE \union \left \{
		\teqjudgement{\tenvlong
			{(x_i \oftype \tyA_i)_i}
			{(z_j \oftype \tyB_j \to \anytype)_j}}
		{\tmplT_1 \rel \tmplT_2}
		\right \} \;}{\sig}{\ctyD}
	}
	
\end{mathpar}

\vspace{3mm}
where for $h = \set{\operation{x}{k} \mapsto c_\op}_\op$ we define:
\begin{align*}
  \thandle{z_i(v)}{f_j \subsfor z_j}_j &= \apply{f_i}{v} \\
  \thandle{(\ifclause{v}{\tmplT_1}{\tmplT_2})}{f_j \subsfor z_j}_j &= 
  \ifclause{v}
  {\thandle{\tmplT_1}{f_j \subsfor z_j}_j}
  {\thandle{\tmplT_2}{f_j \subsfor z_j}_j}\\
  \thandle{\operation{v}{y.\tmplT}}{f_j \subsfor z_j}_j &=
  \subs{c_\op}{v \subsfor x, (\fun{y}{\thandle{\tmplT}{f_j \subsfor z_j}_j}) \subsfor k}
\end{align*}

	\caption{Non-standard rules of the logic~\eqlogic.}\label{fig:equational_logic}
\end{ruledfigure}

The first rule allows us to use equations we have in computation types. Any instantiation of templates $\tmplT_1 \rel \tmplT_2$ in $\eqE$ produces an equivalence between computations at any type $\ctypebasic$. This rule is a generalisation of the rule present in the original logic for algebraic effects~\cite{DBLP:conf/lics/PlotkinP08}, except that the effect theory is local rather than global.

When instantiating the two templates in an equation, we need to ensure that the final terms are of equal types. This is ensured by the following lemma.

\begin{lemma}\label{lem:template_inst}
Suppose we have a well-typed template $\wftemplate[\tenvlong{{(x_i \oftype \tyA_i)}_i}{{(z_j \oftype \tyB_j \to\anytype)}_j}]{\tmplT}{\sig}$ and values $\welltyped{v_i}{\tyA_i}$ for each $i$ and $\welltyped{f_j}{\tyB_j \to \ctyC}$ for each $j$. Then, we have ${\welltyped{\subs{(\subs{\tmplT}{f_j \subsfor z_j}_j)}{v_i \subsfor x_i}_i}{\ctyC}}$.
\end{lemma}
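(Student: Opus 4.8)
The plan is to peel apart the two substitutions that constitute the instantiation: I first show that replacing the template variables $z_j$ by the functions $f_j$ already yields a well-typed computation with the $x_i$ left free, and only then discharge the $x_i$ via the standard simultaneous value-substitution lemma. It is convenient (and matches every use of the lemma, e.g.\ the first rule of Figure~\ref{fig:equational_logic}) to record that $\ctyC$ carries the same signature against which $\tmplT$ is typed, say $\ctyC = \ctype{\tyB_0}{\sig}{\eqE_0}$; the operation-call case below genuinely needs this. The intermediate claim I would prove is the following, stated for a \emph{general} value context $\Delta$ rather than for $(x_i \oftype \tyA_i)_i$: whenever $\wftemplate[\tenvlong{\Delta}{\tvars}]{\tmplT}{\sig}$ with $\tvars = (z_j \oftype \tyB_j \to \anytype)_j$ and $\welltyped{f_j}{\tyB_j \to \ctyC}$ for each $j$, then $\welltyped[\ctx, \Delta]{\subs{\tmplT}{f_j \subsfor z_j}_j}{\ctyC}$ (assuming, as always arrangeable by renaming, that $\ctx$ and $\Delta$ share no variables).

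I would establish the intermediate claim by structural induction on the derivation of $\wftemplate[\tenvlong{\Delta}{\tvars}]{\tmplT}{\sig}$, following Figure~\ref{fig:typing-templates}. In the applied-variable case $\tmplT = \apply{z_j}{v}$ we have $\welltyped[\Delta]{v}{\tyB_j}$, and the instantiation is $\apply{f_j}{v}$; weakening both $v$ and $f_j$ into $\ctx, \Delta$ and using the application rule of Figure~\ref{fig:typing-judgements} gives the result at $\ctyC$. The conditional case is immediate from the induction hypotheses on the two subtemplates (same $\Delta$) together with the conditional rule, after weakening the scrutinee. The operation-call case $\tmplT = \operation{v}{y.\tmplT'}$ is the one that uses the signature assumption: the template rule provides $(\op \oftype \tyA \to \tyB) \in \sig$, $\welltyped[\Delta]{v}{\tyA}$, and a subderivation in the \emph{extended} context $\Delta, y \oftype \tyB$. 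Applying the induction hypothesis with $\Delta$ replaced by $\Delta, y \oftype \tyB$ (the $f_j$ survive by weakening, since $y$ is fresh) yields $\welltyped[\ctx, \Delta, y \oftype \tyB]{\subs{\tmplT'}{f_j \subsfor z_j}_j}{\ctyC}$, and since $\op \in \sig$ is exactly the signature of $\ctyC$, the operation-call rule for computations delivers $\welltyped[\ctx, \Delta]{\operation{v}{y.\subs{\tmplT'}{f_j \subsfor z_j}_j}}{\ctyC}$, which is the instantiation of $\tmplT$.

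To finish, I specialise the intermediate claim to $\Delta = (x_i \oftype \tyA_i)_i$, obtaining $\welltyped[\ctx, (x_i \oftype \tyA_i)_i]{\subs{\tmplT}{f_j \subsfor z_j}_j}{\ctyC}$, and then apply the simultaneous value-substitution lemma (the same property underlying Preservation in Theorem~\ref{theorem:safety}) with the hypotheses $\welltyped{v_i}{\tyA_i}$ to substitute each $v_i$ for $x_i$ and return to context $\ctx$, which is precisely $\welltyped{\subs{(\subs{\tmplT}{f_j \subsfor z_j}_j)}{v_i \subsfor x_i}_i}{\ctyC}$. The two points that need care are exactly the ones flagged above: the operation-call case typechecks only because $\ctyC$ has signature $\sig$ (implicit in the statement but essential), and the induction must be generalised over an arbitrary value context $\Delta$ so that it can absorb the operation-bound variable $y$ — fixing $\Delta = (x_i \oftype \tyA_i)_i$ would break the induction. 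The remaining ingredients, weakening and value substitution, are routine structural meta-theory already needed for Theorem~\ref{theorem:safety}.
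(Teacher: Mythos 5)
Your proof is correct, and it is exactly the routine structural induction (generalised over an arbitrary value context $\Delta$ so the operation-bound variable $y$ can be absorbed, followed by the standard value-substitution lemma) that the paper leaves implicit by stating Lemma~\ref{lem:template_inst} without proof. You are also right to flag that the statement silently requires the signature of $\ctyC$ to contain the $\sig$ against which $\tmplT$ is typed --- the operation-call case fails otherwise --- and that this side condition is satisfied at every use site, e.g.\ the inheritance rule of Figure~\ref{fig:equational_logic}, where $\ctyC = \ctypebasic$ and well-formedness of $\eqE$ with respect to $\sig$ supplies it.
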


The last two rules describe when a handler respects an effect theory and are similar to the rules present in the original treatment of handlers~\cite{DBLP:conf/esop/PlotkinP09, DBLP:journals/corr/PlotkinP13}, but adapted to local effect theories. The first of the two rules treats the empty theory as before and the second allows us to extend the theory with a single equation between templates. Lemma~\ref{lem:template_inst} again guarantees that equations in the hypotheses are well typed.

To show how rules of \eqlogic\ can be used in practice, let us return to the running examples from Section~\ref{sec:overview}.

\begin{example}\label{exa:pickLeft}
	Recall the definition of the handler
	\begin{align*}
		&\mathit{pickLeft} \; = \; \keyhandler \; \{ \\
		&\tab | \; \mathit{choose} (\unit; k) \mapsto \apply{k}{\true}\\
		&\}
	\end{align*}
	We wish to show that it has the type:
	\[
	  \mathit{pickLeft} \oftype \ctype{\inttype}{\set{\mathit{choose}}}{\set{\eqref{idem},\eqref{assoc}}} \hto \ctype{\inttype}{\emptyset}{\emptyset}
	\]
	We first focus on the equation~{\eqref{idem}}:
	\[
		z \oftype \unittype \to \anytype \vdash
		\operation[\mathit{choose}]{\unit}{y. \ifclause{y}{\apply{z}{\unit}}{\apply{z}{\unit}}} \rel {\apply{z}{\unit}}
	\]
	%
	To show that $\mathit{pickLeft}$ respects~\eqref{idem}, we must prove that for any $f \oftype \unittype \to \ctype{\inttype}{\emptyset}{\emptyset}$ (recall this type is obtained by instantiating $\anytype$ with the right-hand side of the handler type) it holds that
	\[
	\thandle[\mathit{pickLeft}]{(\operation[\mathit{choose}]{\unit}{y. \ifclause{y}{\apply{z}{\unit}}{\apply{z}{\unit}}})}{f \subsfor z}
	\lequal_{\ctype{\inttype}{\emptyset}{\emptyset}}
	\thandle[\mathit{yieldAll}]{({\apply{z}{\unit}})}{f \subsfor z}.
	\]
	We rewrite both sides according to definition of $\thandle[\mathit{pickLeft}]{(\_)}{f \subsfor z}$ to the formula
	\[
		\apply{(\fun{y}{\ifclause{y}{\apply{f}{\unit}}{\apply{f}{\unit}}})}{\true}
		\lequal_{\ctype{\inttype}{\emptyset}{\emptyset}}
		\apply{f}{\unit}
	\]
	By using $\beta$-laws the left side first simplifies to
	\[
		{\ifclause{\true}{\apply{f}{\unit}}{\apply{f}{\unit}}}
	\]
	and then further to
	\(
		\apply{f}{\unit}
	\)
	which concludes the proof.

	The proof for \eqref{assoc}, while requiring more space to write out, is no more difficult. For the left side we rewrite $\thandle[\mathit{pickLeft}]{(z_1 \oplus (z_2 \oplus z_3))}{f_1 \subsfor z_1, f_2 \subsfor z_2, f_3 \subsfor z_3}$ to
	\[
		(\fun{y}{\ifclause{y}{\apply{f_1}{\unit}}{(\fun{y'}{\ifclause{y'}{\apply{f_2}{\unit}}{\apply{f_3}{\unit}}}) \; \true}}) \; \true
	\]
	which can be simplified to $\apply{f_1}{\unit}$. The same process is repeated for the right side of the equation.
\end{example}

\begin{example}\label{exa:yield-proof1}
	Recall the definition of the handler
	\begin{align*}
	  &\mathit{yieldAll} \; = \; \keyhandler \; \{ \\
	  &\tab | \; \operation[\mathit{choose}]{\unit}{k} \mapsto \apply{k}{\true}; \apply{k}{\false} \\
	  &\tab | \; \return{x} \mapsto \operation[\mathit{yield}]{x}{\_. \return{\unit}} \\
	  &\}
	\end{align*}
	We wish to show that it can be given the type:
	\[
	  \mathit{yieldAll} \oftype \ctype{\inttype}{\set{\mathit{choose}}}{\set{\eqref{comm}}} \hto \ctype{\unittype}{\set{\mathit{yield}}}{\set{\eqref{yieldOrder}}},
	\]
	where the equations~{\eqref{comm}} and~\eqref{yieldOrder} are
	\begin{multline*}
		z_1 \oftype \unittype \to \anytype,
		z_2 \oftype \unittype \to \anytype \vdash \\
		\operation[\mathit{choose}]{\unit}{y. \ifclause{y}{\apply{z_1}{\unit}}{\apply{z_2}{\unit}}} \rel \operation[\mathit{choose}]{\unit}{y. \ifclause{y}{\apply{z_2}{\unit}}{\apply{z_1}{\unit}}}
	\end{multline*}
	and
	\[
		x \oftype \inttype,
		y \oftype \inttype;
		z \oftype \unittype \to \anytype \vdash
		\operation[\mathit{yield}]{x}{\_. \operation[\mathit{yield}]{y}{\_. \apply{z}{\unit}}}
	  \rel
	  \operation[\mathit{yield}]{y}{\_. \operation[\mathit{yield}]{x}{\_. \apply{z}{\unit}}}	
	\]
	respectively.
	
	To show that $\mathit{yieldAll}$ respects~\eqref{comm}, it is enough that we prove that for any functions $f_1, f_2 \oftype \unittype \to \ctype{\unittype}{\set{\mathit{yield}}}{\set{\eqref{yieldOrder}}}$ it holds that
	\begin{multline*}
	\thandle[\mathit{yieldAll}]{(\operation[\mathit{choose}]{\unit}{y. \ifclause{y}{\apply{z_1}{\unit}}{\apply{z_2}{\unit}}})}{f_1 \subsfor z_1, f_2 \subsfor z_2}
	\\
	\lequal_{\ctype{\unittype}{\set{\mathit{yield}}}{\set{\eqref{yieldOrder}}}}
	\\
	\thandle[\mathit{yieldAll}]{(\operation[\mathit{choose}]{\unit}{y. \ifclause{y}{\apply{z_2}{\unit}}{\apply{z_1}{\unit}}})}{f_1 \subsfor z_1, f_2 \subsfor z_2}.
	\end{multline*}
	
	By using the definition of $\mathit{yieldAll}$, the left hand-side can be rewritten to the sequence of computations
	\begin{align*}
	&\apply{(\fun{y}{\ifclause{y}{\apply{f_1}{\unit}}{\apply{f_2}{\unit}}})}{\true};\\
	&\apply{(\fun{y}{\ifclause{y}{\apply{f_1}{\unit}}{\apply{f_2}{\unit}}})}{\false}.
	\end{align*}
	which is $\beta$-equivalent to $\apply{f_1}{\unit}; \apply{f_2}{\unit}$. We repeat the process for the right-hand side to obtain the equation:
	\[
	\apply{f_1}{\unit}; \apply{f_2}{\unit}
	\lequal_{\ctype{\unittype}{\set{\mathit{yield}}}{\set{\eqref{yieldOrder}}}} 
	\apply{f_2}{\unit}; \apply{f_1}{\unit}.
	\]
	At this step, we postpone the remainder of the proof to Example~\ref{exa:yield-proof2}, as \eqlogic\ is unfortunately not powerful enough to finish it. A crucial piece missing is the principle of computational induction~\cite{DBLP:conf/lics/PlotkinP08, DBLP:journals/corr/BauerP13}, which captures the inductive structure of computation types $\ctypebasic$.
\end{example}

\subsection{Predicate logic with induction~\predlogic}\label{sub:full-logic}

In order to state induction in our logic, we need to extend our judgements with hypotheses and universal quantifiers. We now extend the logic to a first-order predicate logic~\predlogic. In addition to equations, the \emph{formulae}~$\varphi$ include logical connectives and quantifiers over value types:
\[
	\begin{array}{rrll}
		\text{formulae}~\varphi, \psi
		 & ::=   & v_1 \lequal_{\tyA} v_2                                                                             & \text{value equation}          \\
		 & \vsep & c_1 \lequal_{\ctyC} c_2                                                                          & \text{computation equation}     \\
		 & \vsep & \ltrue                                                                          & \text{truth}     \\
		 & \vsep & \lfalse                                                                          & \text{falsity}     \\
		 & \vsep & \varphi_1 \land \varphi_2                                                             & \text{conjunction} \\
		 & \vsep & \varphi_1 \lor \varphi_2                                                             & \text{disjunction} \\
		 & \vsep & \varphi \limplies \psi                                                             & \text{implication} \\
		 & \vsep & \lall{x \oftype \tyA} \varphi                                                                    & \text{universal quantification}          \\
		 & \vsep & \lexists{x \oftype \tyA} \varphi                                                                    & \text{existential quantification} \\
	\end{array}
\]
We type formulae in a context $\ctx$ and extend judgements with hypotheses to ones of the form $\ljudgement{\ctx}{\hypo}{\varphi}$, where $\hypo$ is a set of formulae $\psi_1, \dots, \psi_n$. In addition to the rules of \eqlogic{} (extended with hypotheses), the logic~\predlogic{} includes the standard rules for logical connectives and quantifiers, which we omit (cf.~\cite{DBLP:conf/lics/PlotkinP08, DBLP:phd/ethos/Pretnar10}), and a principle of induction, on which we focus now.

The principle of induction states that a property holds for all computations of type $\ctypebasic$, if it holds for all computations that return a value of type $\tyA$, and for all computations that call an operation~$\op \in \sig$ under the induction hypotheses that it holds for all possible continuations. For a schema producing a formula $\varphi(c)$ for any computation $c$, the induction is stated as: 
\[
	\inferrule*[]
	{
	\ljudgement{\ctx}{\hypo}{c \oftype \ctypebasic}
	\\
	\ljudgement{\ctx, x \oftype \tyA}{\hypo}{\varphi(\return x)}
	\\
	\Big[\ljudgement{\ctx, x \oftype \tyA_\op, k \oftype \tyB_\op \to \ctypebasic}
		{\hypo, \big(\fall{y \oftype \tyB_\op} \varphi(\apply{k}{y})\big)}{\varphi(\operation{x}{y.\apply{k}{y}})}\Big]
	_{\op : \tyA_\op \to \tyB_\op \in \sig}
	}
	{ \ljudgement{\ctx}{\hypo}{\varphi(c)}
	}
\]

\begin{example}\label{exa:yield-proof2}
	Using induction, we may finally complete the proof started in Example~\ref{exa:yield-proof1}. Recall we were left at proving
	\[
		\apply{f_1}{\unit}; \apply{f_2}{\unit}
		\lequal_{\ctyD} 
		\apply{f_2}{\unit}; \apply{f_1}{\unit}.
	\]
	where we abbreviate $\ctyD = \ctype{\unittype}{\set{\mathit{yield}}}{\set{\eqref{yieldOrder}}}$.
	We wish to show with induction that in $\ctyD$ any two computations commute. To prove this, we first show that a single call of $\mathit{yield}$ commutes with any computation in $\ctyD$, so we take~$\varphi_1(c_1)$ to be: 
	\[
		 \mathit{yield}(x; \_.c_1); c_2 \lequal_{\ctyD} c_1; \mathit{yield}(x; \_.c_2) .
	\]
	In the proof we include hints on what rule we used (for instance ``$\beta$ for $;$ and $\op$'' means we used the $\beta$-law dealing with sequencing and operations). We first prove the base case for $c_1 = \return \unit$:
	\begin{align*}
		&\mathit{yield}(x; \_.\return \unit);\; c_2 &\\
		{}    \lequal_{\ctyD} {}& \mathit{yield}(x; \_.(\return \unit;\; c_2)) & (\text{$\beta$ for $;$ and $\op$})\\
		{}    \lequal_{\ctyD} {}& \mathit{yield}(x; \_.c_2) & (\text{$\beta$ for $;$ and $\return{}$}) \\
		{}    \lequal_{\ctyD} {}& \return \unit;\; \mathit{yield}(x; \_.c_2) & (\text{$\beta$ for $;$ and $\return{}$, other direction})
	\end{align*}

	Next, we take $c_1 = \mathit{yield}(y; \_. \apply{k}{\unit})$ and prove the induction step using the hypothesis
	\[
		\mathit{yield}(x; \_.\apply{k}{\unit}); c_2 \lequal_{\ctyD} \apply{k}{\unit}; \mathit{yield}(x;\_.c_2).
	\]
	We proceed as:
	\begin{align*}
		&\mathit{yield}(x; \_.\mathit{yield}(y; \_.\apply{k}{\unit}));\; c_2 &\\
		{} \lequal_{\ctyD} {}&\mathit{yield}(y; \_.\mathit{yield}(x; \_.\apply{k}{\unit}));\; c_2 & (\text{\eqref{yieldOrder} holds in $\ctyD$})\\
		{} \lequal_{\ctyD} {}& \mathit{yield}(y; \_.\mathit{yield}(x; \_.\apply{k}{\unit});\; c_2) & (\text{$\beta$ for $;$ and $\op$})\\
		{} \lequal_{\ctyD} {}& \mathit{yield}(y; \_.\apply{k}{\unit};\; \mathit{yield}(x; \_.c_2)) & (\text{induction hypothesis})\\
		{} \lequal_{\ctyD} {}& \mathit{yield}(y; \_.\apply{k}{\unit}) ;\; \mathit{yield}(x; \_.c_2) & (\text{$\beta$ for $;$ and $\op$, other direction})
	\end{align*}

	We now show that any two computations in $\ctyD$ commute. For that we take $\varphi_2(c_1)$ to be:
	\[
		c_1; c_2 \lequal_{\ctyD} c_2; c_1.
	\]
	We again first show the base case for $c_1 = \return \unit$ by using the $\beta$-equivalence for sequencing and return:
	\[
		\return \unit;\; c_2 \lequal_{\ctyD} c_2 \lequal_{\ctyD} c_2;\; \return \unit
	\]
	We then show the induction step for $c_1 = \mathit{yield}(y; \_. \apply{k}{\unit})$ with the hypothesis $\apply{k}{\unit}; c_2 \lequal_{\ctyD} c_2; \apply{k}{\unit}$:
	\begin{align*}
		&\mathit{yield}(x; \_.\apply{k}{\unit}); \; c_2 &\\
		\lequal_{\ctyD}\;& \mathit{yield}(x; \_.\apply{k}{\unit}; \; c_2) & (\text{$\beta$ for $;$ and $\op$})\\
		\lequal_{\ctyD}\;& \mathit{yield}(x; \_.c_2; \; \apply{k}{\unit}) & (\text{induction hypothesis})\\
		\lequal_{\ctyD}\;& \mathit{yield}(x; \_.c_2); \; \apply{k}{\unit} & (\text{$\beta$ for $;$ and $\op$, other direction})\\
		\lequal_{\ctyD}\;& c_2; \; \mathit{yield}(x; \_.\apply{k}{\unit}) & (\text{$\varphi_1(\apply{k}{\unit})$ from previous proof})
	\end{align*}
\end{example}

\section{Denotation of types and terms}\label{sec:type-term-semantics}

We have presented a type system in which well-formed types depend on well-typed terms and vice-versa. To avoid circularity when defining the denotational semantics of such types and terms, we proceed in two stages. First, we define a denotation of types that is independent of effect theories. This allows us to further define the denotation of well-typed terms, which similarly does not take effect theories into an account. In Section~\ref{sec:effect-theory-semantics} we then equip each type with an equivalence relation that stems from the effect theory and show that the term denotations are well-defined with respect to it.

\subsection{Semantics of types}\label{sub:semantics-types}

To value types $\tyA$ (and computation types $\ctyC$) we assign sets $\sem{\tyA}$ (and $\sem{\ctyC}$) as follows:
\begin{align*}
	\sem{\unittype}        & = \set{\star}                  &
	\sem{\booltype}        & = \set{\semFalse, \semTrue}      \\
	\sem{\tyA \to \ctyC}   & = \sem{\tyA} \to {\sem{\ctyC}} &
	\sem{\ctyC \hto \ctyD} & = \sem{\ctyC} \to \sem{\ctyD}    \\
	\sem{\ctypebasic}      & = \sem{\sig}\sem{\tyA}
\end{align*}
where for $\sig = {\{ \op \oftype \tyA_\op \to \tyB_\op \}}_\op$, we define $\sem{\sig}$ to be the free functor mapping a set $X$ to the inductively defined set $\sem{\sig} X$ containing:
\begin{enumerate}
	\item $\inval(a)$ for each $a$ in $X$
	\item $\inop(a; \kappa)$ for each $\op \oftype \tyA_\op \to \tyB_\op \in \sig$, each $a \in \sem{\tyA_\op}$ and each $\kappa \in \sem{\tyB_\op} \to \sem{\sig} X$
\end{enumerate}
Note that handlers are interpreted by ordinary functions and $\eqE$ does not play a role in the denotation of $\ctypebasic$.

Next, an \emph{interpretation}~$H$ of a signature~$\sig$ over a set $Y$ is a family of functions $H_\op : \sem{\tyA_\op} \times (\sem{\tyB_\op} \to Y) \to Y$ for each $\op \; \oftype \; \tyA_\op \to \tyB_\op \in \sig$. We define the set~$\interp{\sig}{Y}$ of all interpretations by:
\[
	\interp{\sig}{Y} =
	\prod_{\op \; \oftype \; \tyA_\op \to \tyB_\op \in \sig}
	\sem{\tyA_\op} \times (\sem{\tyB_\op} \to Y) \to Y
\]
For any signature~$\sig$ and set~$X$, we define a \emph{free interpretation}~$\freeInterp{X}{\sig} \in \interp{\sig}{\sem{\sig} X}$ by:
\[
	(\freeInterp{X}{\sig})_\op(a)(\kappa) = \inop(a; \kappa)
\]
Next, for any interpretation $H \oftype \interp{\sig}{Y}$, we can \emph{lift} a function $f \oftype X \to Y$ to a function $\lift[H]{f} \oftype \sem{\sig}X \to Y$, defined recursively by:
\begin{align*}
	\lift[H]{f}(\inval(x))        & = f(x),\\
	\lift[H]{f}(\inop(x; \kappa)) & = H_\op(x; \lift[H]{f} \circ \kappa).
\end{align*}

\subsection{Semantics of well-typed values and computations}\label{sub:semant-expr-comp}

Well-typed terms
\begin{align*}
	\judgement{\ctx}{v}{\tyA}
	\qquad\text{and}\qquad
	\judgement{\ctx}{c}{\ctyC}
\end{align*}
are interpreted as maps
\begin{equation*}
	\sem{\judgement{\ctx}{v}{\tyA}} : \sem{\ctx} \to \sem{\tyA}
	\qquad\text{and}\qquad
	\sem{\judgement{\ctx}{c}{\ctyC}} : \sem{\ctx} \to \sem{\ctyC},
\end{equation*}
where $\ctx$ is defined component-wise:
\begin{align*}
	\sem{\varepsilon}          & = \set{\star}                  \\
	\sem{\ctx, x \oftype \tyA} & = \sem{\ctx} \times \sem{\tyA}
\end{align*}
The definition proceeds by recursion on the derivation of the typing judgment. When no confusion can arise, we abbreviate the denotations to $\sem{v}$ and $\sem{c}$.

Given an environment $\eta \in \sem{\ctx}$, the rules for base values are
\begin{align*}
	\jsem{\ctx}{x_i}{A_i}\eta          & = \eta_i    &
	\jsem{\ctx}{\unit}{\unittype}\eta  & = \star       \\
	\jsem{\ctx}{\false}{\booltype}\eta & = \semFalse &
	\jsem{\ctx}{\true}{\booltype}\eta  & = \semTrue
\end{align*}
while for functions, we have:
\begin{equation*}
	\jsem{\ctx}{(\fun{x}{c})}{\tyA \to \ctyC}\eta =
	\lambdafun{a}{\sem{\tyA}}{\jsem{\ctx, x \oftype \tyA}{c}{\ctyC}{(\eta, a)}}
\end{equation*}

In order to define the denotation of handlers, we must first treat operation clauses. A set of well-typed clauses~$\welltyped{\hcases}{\sig \casesto \ctyD}$ is defined as a map
\begin{equation*}
	\sem{\judgement{\ctx}{\hcases}{\sig \casesto \ctyD}} : \sem{\ctx} \to \interp{\sig}{\sem{\ctyD}}
\end{equation*}
where
\begin{multline*}
	\sem{\judgement{\ctx}{\set{\operation{x}{\contin} \mapsto c_\op}_\op}{\sig \casesto \ctyD}}\eta = \\
	{\big \{
		\lambdafun{a}{\sem{\tyA_\op}}{
			\lambdafun{\kappa}{\sem{\tyB_\op \to \ctyD}}{
				\jsem{\ctx, x \oftype \tyA_\op, \contin \oftype \tyB_\op \to \ctyD}{c_\op}{\ctyD}(\eta, a, \kappa)
			}}
		\big \}}_{\op \; \oftype \; \tyA_\op \to \tyB_\op \in \sig}	
\end{multline*}

A denotation of a handler is just the lifting of its return clause to the interpretation given by the operation clauses:
\begin{equation*}
	\jsem{\ctx}{\handler \hcases}{\ctypebasic \hto \ctyD}\eta = \lift[\sem{\hcases}\eta]{(\lambdafun{a}{\sem{\tyA}}{\sem{c_r}(\eta, a)})}
\end{equation*}
In contrast to the original denotational semantics of handlers~\cite{DBLP:journals/corr/PlotkinP13}, effect theories do not affect the denotation of types, so all handlers receive a denotation.

The denotation of computations is more or less structural:
\begin{align*}
	\jsem{\ctx}{\ifclause{v}{c_1}{c_2}}{\ctyC}\eta & =
	\begin{cases}
		\jsem{\ctx}{c_1}{\ctyC}\eta & \text{if $\jsem{\ctx}{v}{\booltype}\eta = \semTrue$,} \\
		\jsem{\ctx}{c_2}{\ctyC}\eta & \text{if $\jsem{\ctx}{v}{\booltype}\eta = \semFalse$}
	\end{cases} \\
	\jsem{\ctx}{\apply{v_1}{v_2}}{\ctyC}\eta       & =
		(\jsem{\ctx}{v_1}{\tyA \to \ctyC}\eta) (\jsem{\ctx}{v_2}{\tyA}\eta)
\end{align*}
Returned values and operations of type $\ctypebasic$ are interpreted by appropriate constructors of $\sem{\sig}{\sem{\tyA}}$:
\begin{align*}
	\jsem{\ctx}{\return v}{\ctypebasic}\eta 
		& = \inval(\jsem{\ctx}{v}{\tyA}\eta) \\
	\jsem{\ctx}{\operation{v}{y.c}}{\ctypebasic}\eta 
		& = \\
		& \hspace{-20mm}\inop (\jsem{\ctx}{v}{\tyA_\op}\eta; \lambdafun{b}{\sem{\tyB_\op}}{\jsem{\ctx, y \oftype \tyB_\op}{c}{\ctypebasic}{(\eta,b)}}).
\end{align*}
The denotation of sequencing is obtained by lifting the continuation of the second sequent to the free interpretation and applying it to the first sequent:
\begin{multline*}
	\jsem{\ctx}{\dobind{x}{c_1}{c_2}}{\ctype{\tyB}{\sig}{\eqE}}\eta = \\
	\lift[\freeInterp{\sem{\tyA}}{\sig}]{(\lambdafun{a}{\sem{\tyA}}{\jsem{\ctx, x \oftype \tyA}{c_2}{\ctype{\tyB}{\sig}{\eqE}}{(\eta, a)}})}
	(\jsem{\ctx}{c_1}{\ctypebasic}\eta).
\end{multline*}
Finally, since handlers are ordinary functions, handling is just application:
\begin{align*}
	\jsem{\ctx}{\withhandle{v}{c}}{\ctyD}\eta      & =
		(\jsem{\ctx}{v}{\ctyC \hto \ctyD}\eta) (\jsem{\ctx}{c}{\ctyC}\eta)
\end{align*}

\subsection{Relation to operational semantics}\label{sub:semantics-properties}

Before we proceed, we first ensure that the presented denotational semantics is sound with respect to the operational semantics.

\begin{proposition}\label{prop:soundness-operational}
	If $\welltyped[]{c}{\ctyC}$ and $\smallstep{c}{c'}$ then $\welltyped[]{c'}{\ctyC}$ and $\jsem{}{c}{\ctyC} = \jsem{}{c'}{\ctyC}$.
\end{proposition}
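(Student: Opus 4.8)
The plan is to prove Proposition~\ref{prop:soundness-operational} by induction on the derivation of the small-step relation $\smallstep{c}{c'}$, with a simultaneous appeal to Preservation (Theorem~\ref{theorem:safety}) to handle the typing half of the conclusion. For each reduction rule in Figure~\ref{fig:operational}, I would show that the two sides receive equal denotations under $\sem{-}$. The congruence rules (the ones with premises $\smallstep{c_1}{c_1'}$ for sequencing and $\smallstep{c}{c'}$ for handling) are dispatched immediately by the induction hypothesis together with the fact that the denotation of the enclosing construct is defined compositionally from the denotation of its subcomputation, so equal denotations of the redex propagate to equal denotations of the whole.

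The interesting cases are the base reductions, and each reduces to a small denotational calculation. For the $\beta$-rule $\smallstep{\apply{(\fun{x}{c})}{v}}{\subs{c}{v\subsfor x}}$ and the return-sequencing rule $\smallstep{\dobind{x}{\return{v}}{c}}{\subs{c}{v\subsfor x}}$, the key tool is a \emph{substitution lemma} stating that $\jsem{\ctx}{\subs{c}{v\subsfor x}}{\ctyC}\eta = \jsem{\ctx, x\oftype\tyA}{c}{\ctyC}(\eta, \jsem{\ctx}{v}{\tyA}\eta)$; combined with the definitions $\sem{\fun{x}{c}}\eta = \lambda a.\,\sem{c}(\eta,a)$ and the computation of $\lift[\freeInterp{\sem{\tyA}}{\sig}]{(\dots)}(\inval(\dots))$, both sides collapse to the same value. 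The conditional rules follow from the case split in $\sem{\ifclause{v}{c_1}{c_2}}$ once $\sem{\true} = \semTrue$ and $\sem{\false} = \semFalse$ are unfolded. The operation-sequencing rule $\smallstep{\dobind{x}{\operation{v}{y.c_1}}{c_2}}{\operation{v}{y.\dobind{x}{c_1}{c_2}}}$ unfolds using $\lift[\freeInterp{\sem{\tyA}}{\sig}]{g}(\inop(x;\kappa)) = \inop(x; \lift[\freeInterp{\sem{\tyA}}{\sig}]{g}\circ\kappa)$, which is exactly the recursive clause in the definition of $\mathsf{lift}$, so the free interpretation threading through the operation matches the operation wrapping the threaded continuation.

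The genuinely delicate cases are the two handler reductions. For $\smallstep{\withhandle{(\handler\hcases)}{(\return v)}}{\subs{c_r}{v\subsfor x}}$, the left side unfolds via $\sem{\handler\hcases}\eta = \lift[\sem{\hcases}\eta]{(\lambda a.\,\sem{c_r}(\eta,a))}$ applied to $\inval(\sem{v}\eta)$, and the defining equation $\lift[H]{f}(\inval(x)) = f(x)$ yields $\sem{c_r}(\eta, \sem{v}\eta)$, which the substitution lemma identifies with the right side. The hard part is the operation-handling rule, where the redex $\withhandle{(\handler\hcases)}{(\operation{v}{y.c})}$ steps to $\subs{c_\op}{v\subsfor x, (\fun{y}{\withhandle{(\handler\hcases)}{c}})\subsfor k}$. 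Here I expect to invoke $\lift[H]{f}(\inop(x;\kappa)) = H_\op(x; \lift[H]{f}\circ\kappa)$ with $H = \sem{\hcases}\eta$, so the left side becomes $\sem{\hcases}\eta_\op\big(\sem{v}\eta;\, \lift[\sem{\hcases}\eta]{f}\circ\kappa\big)$ where $\kappa = \lambda b.\,\sem{c}(\eta,b)$; I then have to recognise that the lifted-and-composed continuation is precisely the denotation of the function $\fun{y}{\withhandle{(\handler\hcases)}{c}}$, and that feeding this together with $\sem{v}\eta$ into the clause denotation $\sem{\hcases}\eta_\op$ coincides, via the multi-variable substitution lemma, with $\jsem{\ctx}{\subs{c_\op}{v\subsfor x, (\dots)\subsfor k}}{\ctyD}\eta$.

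The main obstacle is therefore establishing the substitution lemma in sufficient generality (for simultaneous substitution of both a value for $x$ and a function value for $k$, and uniformly across values, computations, and operation clauses) and carefully matching the recursive unfolding of $\mathsf{lift}$ over the handler interpretation against the syntactic recursion of handling inside the continuation. I would prove the substitution lemma first as a separate structural induction over typing derivations, so that the main induction on $\smallstep{c}{c'}$ can treat it as a black box. The typing assertions $\welltyped[]{c'}{\ctyC}$ in each case are exactly Preservation and require no separate argument beyond citing Theorem~\ref{theorem:safety}.
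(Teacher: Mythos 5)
Your proposal follows exactly the paper's approach: the typing half is discharged by Preservation (Theorem~\ref{theorem:safety}) and the denotational equality by induction on the derivation of $\smallstep{c}{c'}$, which the paper describes as ``an easy induction'' and which you have simply fleshed out case by case, including the standard substitution lemma needed for the $\beta$-like and handler reductions. The proposal is correct and adds no genuinely different ideas beyond making the routine calculations explicit.
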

\begin{proof}
	The fact that $c'$ has the same type follows from type preservation in Theorem~\ref{theorem:safety}, while the proof of the second part proceeds by an easy induction on the derivation of $\smallstep{c}{c'}$.
\end{proof}

As expected, our denotational semantics is also adequate with respect to the operational semantics.

\begin{lemma}[Adequacy]\label{lem:adequacy}
	If $\jsem{}{c}{\ctype{\booltype}{\nil}{\nil}} = \inval{(\jsem{}{v}{\booltype})}$ then $c \leadsto^{*} \return{v}$.
\end{lemma}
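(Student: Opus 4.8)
The plan is to prove adequacy by a logical-relations (formal approximation) argument that simultaneously yields termination and correctness. First observe that the statement reduces to termination: by Progress (Theorem~\ref{theorem:safety}) a closed computation of type $\ctype{\booltype}{\nil}{\nil}$ either steps, is a returned value $\return{w}$, or is an operation call $\operation{w}{k}$ with $\op \in \nil$, and the last case is vacuous because the signature is empty. By Preservation and Proposition~\ref{prop:soundness-operational} every step preserves both the type and the denotation, so if $c$ reached a normal form it would be some $\return{w}$ with $\inval(\jsem{}{w}{\booltype}) = \jsem{}{c}{\ctype{\booltype}{\nil}{\nil}} = \inval(\jsem{}{v}{\booltype})$; since the only closed boolean values are $\true$ and $\false$, and these have distinct denotations, this forces $w = v$. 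Thus the whole content of the lemma is that $c$ terminates, which I obtain together with the correct limit from a single logical relation. Because the calculus has no recursion, its denotations are plain sets with no order structure, so an ordinary inductively defined relation suffices and no admissibility or chain-closure conditions are needed.

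I define by induction on types a relation $R_\tyA$ between elements of $\sem{\tyA}$ and closed values $\welltyped[]{w}{\tyA}$, and $R_\ctyC$ between elements of $\sem{\ctyC}$ and closed computations $\welltyped[]{c}{\ctyC}$. At base types $R_{\unittype}(\star, w)$ holds iff $w = \unit$, and $R_{\booltype}(b, w)$ holds iff $b = \semTrue, w = \true$ or $b = \semFalse, w = \false$. At arrow and handler types the relation takes the usual logical closure: $R_{\tyA \to \ctyC}(g, w)$ iff $R_\ctyC(g(a), \apply{w}{u})$ for all $R_\tyA(a, u)$, and $R_{\ctyC \hto \ctyD}(g, w)$ iff $R_\ctyD(g(d), \withhandle{w}{c})$ for all $R_\ctyC(d, c)$. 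At a computation type $\ctypebasic$ the carrier $\sem{\ctypebasic} = \sem{\sig}\sem{\tyA}$ is inductively generated, so I define $R_\ctypebasic$ by recursion on the structure of its denotational argument: $R_\ctypebasic(\inval(a), c)$ holds iff $c \leadsto^* \return{w}$ for some $w$ with $R_\tyA(a, w)$; and $R_\ctypebasic(\inop(a; \kappa), c)$ holds iff $c \leadsto^* \operation{w}{y.c'}$ for some $w$ with $R_{\tyA_\op}(a, w)$ and $R_\ctypebasic(\kappa(b), \subs{c'}{u \subsfor y})$ for all $R_{\tyB_\op}(b, u)$. By construction $R$ at every computation type is closed under anti-reduction (prepending steps), which the congruence rules of the operational semantics make available.

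The core is the fundamental lemma: for every derivation and every environment $\eta \in \sem{\ctx}$ and closing substitution $\gamma$ with $R_\tyB(\eta_x, \gamma(x))$ for each $(x \oftype \tyB) \in \ctx$, one has $R_\tyA(\jsem{\ctx}{v}{\tyA}\eta, v\gamma)$, $R_\ctyC(\jsem{\ctx}{c}{\ctyC}\eta, c\gamma)$, and a matching statement relating the interpretation $\sem{\hcases}\eta \in \interp{\sig}{\sem{\ctyD}}$ to the syntactic clauses. This is proved by mutual induction on typing derivations; most cases follow immediately from the operational rules and anti-reduction closure. The two cases carrying the real weight are sequencing and handling, both interpreted via $\mathsf{lift}$, for which I first establish auxiliary lemmas by induction on the free-algebra element $d \in \sem{\sig}\sem{\tyA}$. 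For sequencing: if $R_\ctypebasic(d, c_1)$ and $R_{\ctype{\tyB}{\sig}{\eqE}}(g(a), \subs{c_2}{w \subsfor x})$ for all $R_\tyA(a, w)$, then $R_{\ctype{\tyB}{\sig}{\eqE}}(\lift[\freeInterp{\sem{\tyA}}{\sig}]{g}(d), \dobind{x}{c_1}{c_2})$. For handling, writing $r$ for the denotation of the return clause: if $\sem{\hcases}\eta$ is related to $\hcases$ and $R_\ctypebasic(d, c)$, then $R_\ctyD(\lift[\sem{\hcases}\eta]{r}(d), \withhandle{(\handler{\hcases})}{c})$. In each, the $\inval$ case uses the $\beta$-reduction for $\return{}$, while the $\inop$ case uses the operational rule that propagates the operation outward (for sequencing) or invokes the matching clause (for handling), combined with the induction hypothesis applied to the continuation $\kappa$. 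I expect the handling lemma to be the main obstacle: its operation case must show that the re-handled syntactic continuation $\fun{y}{\withhandle{(\handler{\hcases})}{c'}}$ is related to $\lift[\sem{\hcases}\eta]{r} \circ \kappa$, which is precisely the induction hypothesis at the smaller argument, so the deep re-handling of captured continuations must be threaded carefully through the relation.

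Finally, instantiating the fundamental lemma at the closed computation $c$ with the empty environment and substitution gives $R_{\ctype{\booltype}{\nil}{\nil}}(\jsem{}{c}{\ctype{\booltype}{\nil}{\nil}}, c)$. By hypothesis the denotation equals $\inval(\jsem{}{v}{\booltype})$, so the $\inval$ clause of the relation supplies a value $w$ with $c \leadsto^* \return{w}$ and $R_{\booltype}(\jsem{}{v}{\booltype}, w)$; the latter pins down $w = v$, whence $c \leadsto^* \return{v}$, as required.
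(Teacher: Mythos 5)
Your proposal is correct, and it reaches the conclusion by a genuinely different and substantially more self-contained route than the paper. The paper's proof is three lines: it asserts that, since the calculus has no recursion, Theorem~\ref{theorem:safety} yields some terminating reduction $c \leadsto^{*} \return{v'}$ (Progress rules out an operation call because the signature is empty), then applies Proposition~\ref{prop:soundness-operational} to get $\sem{c} = \inval(\sem{v'})$ and concludes $v = v'$ from injectivity of boolean denotations. The load-bearing step --- termination of every closed well-typed computation --- is not actually a consequence of Progress and Preservation alone; the paper is implicitly appealing to strong normalization of the recursion-free fragment without proving it. Your proof fills exactly that gap: after the same reduction to termination, you construct a Tait-style computability relation between denotations and closed terms, with the computation-type case defined by an inner induction on the inductively generated element of $\sem{\sig}\sem{\tyA}$, and you correctly isolate the two nontrivial cases of the fundamental lemma (sequencing and handling, both via $\mathsf{lift}$) and the delicate point in the handling case, namely relating $\lift[\sem{\hcases}\eta]{r} \circ \kappa$ to the re-wrapped syntactic continuation $\fun{y}{\withhandle{(\handler{\hcases})}{c'}}$. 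What the paper's argument buys is brevity; what yours buys is a proof that would survive mechanization and that already has the shape one would need (after adding admissibility conditions) if the language were extended with recursion, as the paper itself suggests doing in Section~\ref{sub:future-work}. One could shorten your argument slightly by noting that the final paragraph makes the first paragraph's reduction to termination redundant --- the fundamental lemma instantiated at $c$ already delivers $c \leadsto^{*} \return{w}$ with $w = v$ directly --- but that is a matter of presentation, not correctness.
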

\begin{proof}
	As our language features no recursion, Theorem~\ref{theorem:safety} implies that there is some sequence of steps $c \leadsto^{*} \return{v'}$. From Proposition~\ref{prop:soundness-operational} it follows that $\sem{c} = \inval{(\sem{v'})}$, thus $\sem{v} = \sem{v'}$. Because distinct boolean values receive distinct denotations, we have $v = v'$ so $c \leadsto^{*} \return{v}$.
\end{proof}

If our language featured recursion, we would instead follow the adequacy proof presented in~\cite{DBLP:journals/corr/BauerP13}, as our operational and denotational semantics are a simplification (we omit instances and subtyping) of the ones given there.

We define a \emph{computation context}~$\CEvar$ as a computation with a number of holes~$\CEhole{\,}$ (possibly under binders) into which one may plug a computation~$c$ to obtain a computation $\CEinstantiate{\CEvar}{c}$. A context~$\CEvar$ is a \emph{ground computation context for $\ctx$ and $\ctyC$} if for all computations $\welltyped{c}{\ctyC}$, we have $\welltyped[]{\CEinstantiate{\CEvar}{c}}{\ctype{\booltype}{\nil}{\nil}}$.

Computations $\welltyped{c}{\ctyC}$ and $\welltyped{c'}{\ctyC}$ are \emph{contextually equivalent}, written as $\ctxequiv{c}{c'}{\ctyC}$ if for all ground computation contexts $\CEvar$ for $\ctx$ and $\ctyC$ we have $\CEinstantiate{\CEvar}{c} \leadsto^* \return{\true}$ if and only if $\CEinstantiate{\CEvar}{c'} \leadsto^* \return{\true}$. We similarly define $\ctxequiv{v}{v'}{\tyA}$ for values.

\begin{corollary}\label{cor:eq_adequacy}
	If $\jsem{\ctx}{c}{\ctyC} = \jsem{\ctx}{c'}{\ctyC}$, then $\ctxequiv{c}{c'}{\ctyC}$.
	If $\jsem{\ctx}{v}{\tyA} = \jsem{\ctx}{v'}{\tyA}$, then $\ctxequiv{v}{v'}{\tyA}$.
\end{corollary}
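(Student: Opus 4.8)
The plan is to reduce contextual equivalence to the two facts already established---soundness (Proposition~\ref{prop:soundness-operational}) and adequacy (Lemma~\ref{lem:adequacy})---glued together by a \emph{compositionality} property of the denotation. I will carry out the computation case in detail; the value case is identical once computation contexts are replaced by value contexts.

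First I would isolate the key lemma that the denotation is compositional: for every ground computation context $\CEvar$ for $\ctx$ and $\ctyC$, the denotation $\sem{\CEinstantiate{\CEvar}{c}}$ depends on $c$ only through its denotation $\jsem{\ctx}{c}{\ctyC}$. Concretely, if $\jsem{\ctx}{c}{\ctyC} = \jsem{\ctx}{c'}{\ctyC}$ then $\sem{\CEinstantiate{\CEvar}{c}} = \sem{\CEinstantiate{\CEvar}{c'}}$. This is proved by structural induction on $\CEvar$. The only delicate point is that a hole may sit under binders, so that the plugged computation is evaluated in an extended environment $\sem{\ctx} \times \sem{\Delta}$; here one uses that every clause of the denotational semantics in Section~\ref{sub:semant-expr-comp} builds $\sem{\CEinstantiate{\CEvar}{c}}\eta$ out of the denotation of the subterm occupying the hole, applied to the (possibly weakened) environment. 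Since $\jsem{\ctx}{c}{\ctyC}$ and $\jsem{\ctx}{c'}{\ctyC}$ agree as maps $\sem{\ctx} \to \sem{\ctyC}$, they agree on every such environment, and the induction goes through.

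With compositionality in hand, fix a ground computation context $\CEvar$ and suppose $\CEinstantiate{\CEvar}{c} \leadsto^* \return{\true}$. Since $\CEinstantiate{\CEvar}{c}$ is closed of type $\ctype{\booltype}{\nil}{\nil}$, repeated application of Proposition~\ref{prop:soundness-operational} along this reduction gives $\sem{\CEinstantiate{\CEvar}{c}} = \sem{\return{\true}} = \inval(\semTrue)$. By compositionality, $\sem{\CEinstantiate{\CEvar}{c'}} = \sem{\CEinstantiate{\CEvar}{c}} = \inval(\semTrue) = \inval(\sem{\true})$, so Lemma~\ref{lem:adequacy} (applied to the value $\true$) yields $\CEinstantiate{\CEvar}{c'} \leadsto^* \return{\true}$. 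The converse direction is symmetric, establishing $\CEinstantiate{\CEvar}{c} \leadsto^* \return{\true}$ if and only if $\CEinstantiate{\CEvar}{c'} \leadsto^* \return{\true}$; as $\CEvar$ was arbitrary, $\ctxequiv{c}{c'}{\ctyC}$. The value statement follows by the same three steps verbatim.

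The main obstacle will be the compositionality lemma, and within it the bookkeeping for holes under binders: one must phrase the inductive hypothesis as an equality of the environment-indexed maps rather than of single values, and invoke weakening so that a subterm typed in $\ctx$ but occurring in a larger context $\ctx,\Delta$ still contributes only through $\jsem{\ctx}{c}{\ctyC}$ precomposed with the projection $\sem{\ctx,\Delta} \to \sem{\ctx}$. Everything downstream is a direct appeal to the already-proved soundness and adequacy results.
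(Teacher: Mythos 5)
Your proof is correct and follows essentially the same route as the paper: apply Proposition~\ref{prop:soundness-operational} along the reduction of $\CEinstantiate{\CEvar}{c}$, transfer the resulting denotational equality to $\CEinstantiate{\CEvar}{c'}$, and conclude with Lemma~\ref{lem:adequacy}. Your explicit compositionality lemma, including the weakening bookkeeping for holes under binders, is exactly what the paper compresses into the phrase ``denotational semantics is structural,'' so you have simply made the paper's implicit step precise.
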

\begin{proof}
	The proof is a folklore application of adequacy. Assume that $\CEinstantiate{\CEvar}{c} \leadsto^* \return{\true}$. By Proposition~\ref{prop:soundness-operational}, $\sem{\CEinstantiate{\CEvar}{c}} = \inval(\semTrue)$. However, denotational semantics is structural, thus $\sem{\CEinstantiate{\CEvar}{c}} = \sem{\CEinstantiate{\CEvar}{c'}}$, which by Lemma~\ref{lem:adequacy} implies $\CEinstantiate{\CEvar}{c'} \leadsto^* \return{\true}$. The proof for values is identical.
\end{proof}

\section{Denotation of effect theories and logics}\label{sec:effect-theory-semantics}

Having provided a sound denotational semantics that disregards effect theories, we proceed by equipping it with relations that reflect the theories.

\subsection{Semantics of templates}\label{sub:semantics-templates}

We first turn our attention to templates, which are the basic building blocks of equations. As template variable contexts~$\tvars$ are polymorphic in the type of computations, we interpret them as functors~$\sem{\tvars}$, defined by:
\begin{align*}
	\sem{\varepsilon} Y & = \set{\star} \\
	\sem{\tvars, z \oftype \tyA \to \anytype} Y & = \sem{\tvars}Y \times Y^{\sem{\tyA}}
\end{align*}
We overload the notation and write $\sem{\ctx}$ for the constant functor that maps any set $Y$ to the set $\sem{\ctx}$ as defined in Section~\ref{sub:semant-expr-comp},

An interpretation~$H \oftype \interp{\sig}{Y}$ can interpret operations of $\sig$ as well as any template using them. For a well-typed template $\wftemplate{\tmplT}{\sig}$, we recursively define $\hsem[H]{\wftemplate{\tmplT}{\sig}} \oftype (\sem{\ctx} \times \sem{\tvars}) Y \to Y$ (often abbreviated to $\hsem[H]{\tmplT}$) as:
\begin{align*}
	\hsem[H]{\wftemplate{z_i(v)}{\sig}}(\eta; \zeta) &= \zeta_i(\sem{v}\eta) \\
	\hsem[H]{\wftemplate{\ifclause{v}{\tmplT_1}{\tmplT_2}}{\sig}}(\eta; \zeta) &=
	\begin{cases}
		\hsem[H]{\tmplT_1}(\eta; \zeta) & ; \text{ if }\sem{v}\eta = \semTrue  \\
		\hsem[H]{\tmplT_2}(\eta; \zeta) & ; \text{ if }\sem{v}\eta = \semFalse
	\end{cases} \\
	\hsem[H]{\wftemplate{\operation{v}{y.\tmplT}}{\sig}}(\eta; \zeta) &=
	H_\op	(\sem{v}\eta, \lambdafun{b}{\sem{\tyB_\op}}{\hsem[H]{\tmplT}(\eta, b; \zeta)}) \\
\end{align*}

\begin{lemma}\label{lem:handler_unroll}
	Take any sets $X, Y$, any interpretation $H \oftype \interp{\sig}{Y}$, any function $f \oftype X \to Y$, and define $\varphi = \lift[H]{f} \oftype \sem{\sig}X \to Y$. Then, the following diagram commutes
	\[
		\begin{tikzcd}
			(\sem{\ctx} \times \sem{\tvars}) (\sem{\sig} X)
			\arrow{r}{(\sem{\ctx} \times \sem{\tvars}) \varphi}
			\arrow{d}{\hsem[\freeInterp{X}{\sig}]{\tmplT}}
			& (\sem{\ctx} \times \sem{\tvars}) Y
			\arrow{d}{\hsem[H]{\tmplT}}
			\\
			\sem{\sig} X
			\arrow{r}{\varphi}
			& Y
		\end{tikzcd}
	\]

	In fact, all functions $\varphi \oftype \sem{\sig}X \to Y$ for which the diagram commutes are lifts of some function $f \oftype X \to Y$.
\end{lemma}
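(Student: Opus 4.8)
The plan is to prove the two assertions separately: the commutation of the square by structural induction on the template $\tmplT$, and the converse characterisation by exhibiting the witnessing function explicitly and inducting on the inductively defined set $\sem{\sig}X$.

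For the first part, I would fix $H$, $f$, and $\varphi = \lift[H]{f}$, and record that $(\sem{\ctx} \times \sem{\tvars})\varphi$ leaves the $\sem{\ctx}$-component untouched and post-composes $\varphi$ onto each template-variable component $\zeta_j \oftype \sem{\tyB_j} \to \sem{\sig}X$. Thus it suffices to show, for every well-typed $\tmplT$, that
\[
	\hsem[H]{\tmplT}(\eta; \varphi \circ \zeta) = \varphi\big(\hsem[\freeInterp{X}{\sig}]{\tmplT}(\eta; \zeta)\big),
\]
by induction on $\tmplT$. In the variable case $\tmplT = z_i(v)$ both sides reduce to $\varphi(\zeta_i(\sem{v}\eta))$; the conditional case follows because $\sem{v}\eta$ selects the same branch on either side and the induction hypothesis applies to each. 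The operation case $\tmplT = \operation{v}{y.\tmplT'}$ is where the content lies: unfolding $\hsem[\freeInterp{X}{\sig}]{\tmplT}$ produces an $\inop$-constructor, and applying $\varphi = \lift[H]{f}$ to it invokes the defining clause $\lift[H]{f}(\inop(x; \kappa)) = H_\op(x; \lift[H]{f} \circ \kappa)$, after which the induction hypothesis on $\tmplT'$ turns the inner $\lift[H]{f} \circ (\cdots)$ into $\varphi \circ (\cdots)$ and matches the unfolding of $\hsem[H]{\tmplT}$ through $H_\op$. So the only nontrivial ingredient is precisely the recursive defining property of $\lift[H]{(-)}$.

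For the converse, suppose $\varphi \oftype \sem{\sig}X \to Y$ makes the square commute for all templates. The natural candidate is $f \defeq \varphi \circ \inval$, and I would prove $\varphi = \lift[H]{f}$ by induction on the structure of elements $w \in \sem{\sig}X$. On $w = \inval(x)$ the equality holds by the very definition of $f$. On $w = \inop(a; \kappa)$, I instantiate the commuting square at the single-operation template $\operation{x}{y.z(y)}$ over the contexts $\ctx = (x \oftype \tyA_\op)$ and $\tvars = (z \oftype \tyB_\op \to \anytype)$: there the free-interpretation leg computes $\inop(a; \kappa)$ while the $H$-leg computes $H_\op(a; \varphi \circ \kappa)$, so commutation yields $\varphi(\inop(a; \kappa)) = H_\op(a; \varphi \circ \kappa)$. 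The induction hypothesis replaces $\varphi \circ \kappa$ by $\lift[H]{f} \circ \kappa$, and the recursive clause for $\lift[H]{f}$ then gives exactly $\varphi(w) = \lift[H]{f}(w)$.

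The genuine mathematical hinge in both directions is the recursion equation for $\lift[H]{(-)}$ on $\inop$; everything else is routine structural induction with careful tracking of the environments $\eta$ and the template-variable tuples $\zeta$ through the operation case. I expect the main subtlety to be the converse: one must notice that a single well-chosen template — one operation call feeding its bound result straight to a lone template variable — is enough to pin down $\varphi$ on every constructor, so that commutation for all templates (indeed just these diagnostic ones) forces $\varphi$ to be a lift.
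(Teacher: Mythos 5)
Your proof is correct. The paper in fact states Lemma~\ref{lem:handler_unroll} without giving any proof, and your argument is precisely the standard one it implicitly relies on: the square commutes by structural induction on $\tmplT$, with the only real content in the operation case coming from the recursion equation $\lift[H]{f}(\inop(x;\kappa)) = H_\op(x;\lift[H]{f}\circ\kappa)$; and the converse follows by setting $f = \varphi\circ\inval$ and pinning down $\varphi$ on each $\inop(a;\kappa)$ via the diagnostic template $\operation{x}{y.\apply{z}{y}}$, which is exactly the right witness since no template constructor ever produces an $\inval$ and hence the return case is constrained only by the definition of $f$ itself.
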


From Lemma~\ref{lem:handler_unroll}, it follows that the free interpretation~$\freeInterp{X}{\sig}$ of a template $\wftemplate{\tmplT}{\sig}$ yields a natural transformation $\alpha : (\sem{\ctx} \times \sem{\tvars}) \circ \sem{\sig} \Rightarrow \sem{\sig}$, given by:
\begin{equation*}
	\alpha_X
	= \hsem[\freeInterp{X}{\sig}]{\wftemplate{\tmplT}{\sig}}
\end{equation*}

\subsection{Semantics of theories}\label{sub:semantics-theories}

With denotation of templates in place, we can focus on effect theories. We remedy the absence of theories in the denotations $\sem{\tyA}$ and $\sem{\ctyC}$ by defining a family of relations $\rel_{\tyA}$ on $\sem{\tyA}$ and $\rel_{\ctyC}$ on $\sem{\ctyC}$. We also extend relations to contexts, and for $\ctx = x_1 \oftype A_1, \dots, x_n \oftype A_n$, we define the relation $\rel_\ctx$ on $\sem{\ctx}$ by
\[
	(a_1, \dots, a_n) \rel_\ctx (a_1', \dots, a_n') \iff a_1 \rel_{A_1} a_1' \land \dots \land a_n \rel_{A_n} a_n'	
\]
For value types, the relations are defined by:
\begin{itemize}
	\item 
		The relations on $\sem{\booltype}$ and $\sem{\unittype}$ are identities.
	\item 
		For a function type $\tyA \to \ctyC$ and $f, f' \oftype \sem{\tyA \to \ctyC}$, we define
		\[
		  f \rel_{\tyA \to \ctyC} f' \iff
		  (\fall{a,a' \in \sem{\tyA}} a \rel_\tyA a' \implies f(a)
		  \rel_{\ctyC} f'(a'))
		\]
	\item 
		For a handler type $\ctyC \hto \ctyD$ and $h, h' \oftype \sem{\ctyC \hto 	\ctyD}$, we define
		\[
		  h \rel_{\ctyC \hto \ctyD} h' \iff
		  (\fall{t,t' \in \sem{\ctyC}} t \rel_{\ctyC} t' \implies h(t)
		  \rel_{\ctyD} h'(t'))
		\]
\end{itemize}
For the computation type $\ctyC = \ctypebasic$, we define the relation~$\rel_{\ctyC}$ on the set $\sem{\ctypebasic}$ to be the smallest transitive and symmetric relation closed under the following rules:
\begin{itemize}
	\item If $a \rel_\tyA a'$ then $\inval(a) \rel_{\ctypebasic} \inval(a')$.
	\item For any operation $\op \oftype \tyA_\op \to \tyB_\op \in \sig$, if $a \rel_{\tyA_\op} a'$ and $\continvar \rel_{\tyB_\op \to \ctyC} \continvar'$ then also
	  \[ \inop(a; \continvar) \rel_{\ctyC} \inop(a'; \continvar'). \]
	\item For all equations $\teqjudgement{\tenv}{\tmplT_1 \rel \tmplT_2}$ in $\eqE$, where $\ctx = (x_i \oftype \tyA_i)_i$ and $\tvars = (z_j \oftype	\tyB_j \to \anytype)_j$, we say that if all $a_i \rel_{\tyA_i} a'_i$ and all $f_j \rel_{\tyB_j \to \ctyC} f'_j$ then
	  \[
		  \hsem[\freeInterp{\sem{A}}{\sig}]{\wftemplate{\tmplT_1}{\sig}}((a_i)_i, (f_j)_j)
		  \rel_{\ctyC}
		  \hsem[\freeInterp{\sem{A}}{\sig}]{\wftemplate{\tmplT_2}{\sig}}((a'_i)_i, (f'_j)_j)
	  \]
\end{itemize}
The last rule ensures that all equations in the effect theory~$\eqE$ are reflected in the relation~$\rel_{\ctyC}$, while all other definitions just propagate them structurally.

\begin{lemma}\label{lem:lift-preserves-rel}
	Let $\ctyC = \ctypebasic$ and $\ctyD = \ctype{\tyB}{\sig}{\eqE}$. If $g \rel_{\tyA \to \ctyD} g'$ then for every $t \rel_{\ctyC} t'$ it holds that 
	\[
		(\lift[\freeInterp{\sem{\tyA}}{\sig}]{g})(t) \rel_{\ctyD} (\lift[\freeInterp{\sem{\tyA}}{\sig}]{g'})(t')
	\]
\end{lemma}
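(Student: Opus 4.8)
The plan is to fix the related pair $g \rel_{\tyA \to \ctyD} g'$ and prove the claim by rule induction on the generation of $\rel_{\ctyC}$, writing simply $\lift{g}$, $\lift{g'}$ for the two lifts $\lift[\freeInterp{\sem{\tyB}}{\sig}]{g}, \lift[\freeInterp{\sem{\tyB}}{\sig}]{g'} \colon \sem{\ctyC} \to \sem{\ctyD}$ (the free interpretation used in the lift is the one on the target $\sem{\ctyD} = \sem{\sig}\sem{\tyB}$). Since $\rel_{\ctyC}$ is by definition the smallest \emph{symmetric} and \emph{transitive} relation closed under the return, operation, and equation rules, the induction must discharge five cases: the three generating rules together with the symmetry and transitivity closure steps. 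Throughout I will use that every relation $\rel_{\tyA}$ and $\rel_{\ctyC}$ is a partial equivalence relation (symmetric and transitive) --- immediate for computation types by construction, and a routine induction on type structure for function and handler types --- so that in particular $\rel_{\tyA \to \ctyD}$ is a PER and $g \rel_{\tyA \to \ctyD} g'$ already yields $g \rel g$ and $g' \rel g'$. To make the closure steps go through I will actually prove the statement $\Psi(t,t')$ that \emph{for all} related $g \rel_{\tyA \to \ctyD} g'$ one has $\lift{g}(t) \rel_{\ctyD} \lift{g'}(t')$, quantifying over the function pair so that it can be re-instantiated at other related pairs later.

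The three generating rules are direct. For the return rule, $\lift{g}(\inval(a)) = g(a)$ and $\lift{g'}(\inval(a')) = g'(a')$, so $a \rel_{\tyA} a'$ with $g \rel_{\tyA \to \ctyD} g'$ gives the result. For the operation rule, the lift reconstructs the node, $\lift{g}(\inop(a;\kappa)) = \inop(a; \lift{g} \circ \kappa)$ and likewise for $g'$; applying the operation rule at $\ctyD$ reduces the goal to $\lift{g} \circ \kappa \rel_{\tyB_\op \to \ctyD} \lift{g'} \circ \kappa'$, which follows pointwise from the induction hypotheses $\Psi(\kappa(b),\kappa'(b'))$ furnished by the premise $\kappa \rel_{\tyB_\op \to \ctyC} \kappa'$. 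The equation rule is the crux and the place where Lemma~\ref{lem:handler_unroll} is indispensable: there $t$ and $t'$ are instances $\hsem[\freeInterp{\sem{\tyA}}{\sig}]{\tmplT_1}((a_i)_i,(f_j)_j)$ and $\hsem[\freeInterp{\sem{\tyA}}{\sig}]{\tmplT_2}((a_i')_i,(f_j')_j)$ of some equation $\tmplT_1 \rel \tmplT_2$ of $\eqE$. Instantiating Lemma~\ref{lem:handler_unroll} with $X = \sem{\tyA}$, $Y = \sem{\ctyD}$, $H = \freeInterp{\sem{\tyB}}{\sig}$ and $f = g$ (so $\varphi = \lift{g}$) lets me commute the lift past the template semantics, rewriting $\lift{g}(t)$ as $\hsem[\freeInterp{\sem{\tyB}}{\sig}]{\tmplT_1}((a_i)_i,(\lift{g} \circ f_j)_j)$ and dually for $\lift{g'}(t')$. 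Since $\ctyD$ carries the \emph{same} theory $\eqE$, its equation rule applies to this very pair $\tmplT_1 \rel \tmplT_2$, and it remains only to check $a_i \rel_{\tyA_i} a_i'$ (a premise) and $\lift{g} \circ f_j \rel_{\tyB_j \to \ctyD} \lift{g'} \circ f_j'$, the latter following pointwise from the hypotheses on $f_j(b) \rel_{\ctyC} f_j'(b')$.

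The closure cases are where a possible difference between $g$ and $g'$ bites, and handling them is the main obstacle; both are resolved by the PER structure. For symmetry, given $\Psi(t',t)$ I show $\Psi(t,t')$: fixing $g \rel g'$, symmetry of $\rel_{\tyA \to \ctyD}$ gives $g' \rel g$, and instantiating $\Psi(t',t)$ at the pair $(g',g)$ yields $\lift{g'}(t') \rel_{\ctyD} \lift{g}(t)$, whence symmetry of $\rel_{\ctyD}$ delivers $\lift{g}(t) \rel_{\ctyD} \lift{g'}(t')$. For transitivity, from $\Psi(t,t')$ and $\Psi(t',t'')$ I show $\Psi(t,t'')$, and here the key observation is that the PER structure turns $g \rel g'$ into $g' \rel g'$, supplying a legitimate bridging function: instantiating $\Psi(t,t')$ at $(g,g')$ and $\Psi(t',t'')$ at $(g',g')$ gives $\lift{g}(t) \rel_{\ctyD} \lift{g'}(t')$ and $\lift{g'}(t') \rel_{\ctyD} \lift{g'}(t'')$, and transitivity of $\rel_{\ctyD}$ closes the case. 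I expect this interplay --- quantifying over the function pair and exploiting that related functions are self-related --- to be the only genuinely delicate point, the remaining cases being bookkeeping once Lemma~\ref{lem:handler_unroll} is in hand.
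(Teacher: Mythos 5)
Your proof is correct and follows the same route as the paper's: rule induction on the generation of $\rel_{\ctyC}$, with the return and operation cases handled structurally and the equation case discharged by commuting the lift past the template semantics via Lemma~\ref{lem:handler_unroll}. The one place you go beyond the paper is in explicitly treating the symmetry and transitivity closure steps (the paper's proof silently omits them); your device of generalizing the induction statement over the related pair $(g,g')$ and exploiting that the relations are PERs --- so that $g \rel g'$ yields the bridging self-relation $g' \rel g'$ needed for transitivity --- is exactly what is required there, and you also correctly read the free interpretation in the lift as the one over $\sem{\ctyD}$.
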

\begin{proof}
	We shorten $\lift[\freeInterp{\sem{\tyA}}{\sig}]$ to $\lift[]$ and $\hsem[\freeInterp{\sem{\tyA}}{\sig}]{\wftemplate{T}{\sig}}$ to $\sem{T}$ for clarity. Since $\rel_{\ctyC}$ is defined to be the smallest relation closed under the given rules, we can proceed by induction on $t \rel_{\ctyC} t'$:
	\begin{enumerate}
		\item 
		First, consider $\inval(a) \rel_{\ctyC} \inval(a')$ where $a \rel_{\tyA} a'$. By definition of $\lift[]{}$ we obtain $(\lift{g})(\inval(a)) = g(a)$ and $(\lift{g'})(\inval(a')) = g'(a')$. Because $g \rel_{\tyA \to \ctyD} g'$ we know that the functions map related arguments to related images.
		\item 
		Next, take $\inop(a; \continvar) \rel_{\ctyC} \inop(a'; \continvar')$ for $\op \oftype \tyA_\op \to \tyB_\op \in \sig$ where $a \rel_{\tyA_\op} a'$ and $\continvar \rel_{\tyB_\op \to \ctyC} \continvar'$. We know that $(\lift{g})(\inop(a; \continvar)) = \inop{(a, \lift{g} \circ \continvar)}$ (and similarly for $g'$). By induction, $\lift{g} \circ \continvar$ is related to $\lift{g'} \circ \continvar'$, and so we obtain the desired result that $\inop{(a, \lift{g} \circ \continvar)} \rel_{\ctyD} \inop{(a', \lift{g'} \circ \continvar')}$.
		\item 
		Finally, we have the interesting case where $t \rel_{\ctyC} t'$ comes from an instantiation of an equation $\teqjudgement{\tenv}{\tmplT_1 \rel \tmplT_2} \in \eqE$, where we denote $\ctx = (x_i \oftype \tyA_i)_i$ and $\tvars = (z_j \oftype	\tyB_j \to \anytype)_j$. That means that there exist $a_i \rel_{\tyA_i} a'_i$ and $f_j \rel_{\tyB_j \to \ctyC} f'_j$ for which
		\[
			t = \sem{\tmplT_1}((a_i)_i, (f_j)_j) 
			\rel
			t' = \sem{\tmplT_2}((a'_i)_i, (f'_j)_j).
		\]
		From Lemma~\ref{lem:handler_unroll}, we have $(\lift{g})(\sem{\tmplT_1}((a_i)_i, (f_j)_j)) = \sem{\tmplT_1}((a_i)_i, (\lift{g} \circ f_j)_j)$ and similarly for $\tmplT_2$. By induction hypotheses we get $\lift{g} \circ f_j \rel \lift{g'} \circ f'_j$ as before, thus
		\[
			\sem{\tmplT_1}((a_i)_i, (\lift{g} \circ f_j)_j) 
			\rel
			\sem{\tmplT_2}((a'_i)_i, (\lift{g'} \circ f'_j)_j).
		\]
		concluding the proof.
	\end{enumerate}
\end{proof}

We can easily adapt Lemma~\ref{lem:adequacy} to consider equivalent computations, with $\rel_{\ctype{\booltype}{\nil}{\nil}}$ being the identity relation.

\begin{lemma}[Adequacy]\label{lem:better_adequacy}
	If $\jsem{}{c}{\ctype{\booltype}{\nil}{\nil}} \rel \inval{(\jsem{}{v}{\booltype})}$ then $c \leadsto^{*} \return{v}$.
\end{lemma}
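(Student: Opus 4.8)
The plan is to deduce the statement directly from Lemma~\ref{lem:adequacy} by first establishing that $\rel_{\ctype{\booltype}{\nil}{\nil}}$ is nothing more than equality on $\sem{\ctype{\booltype}{\nil}{\nil}}$, exactly as announced in the sentence preceding the lemma. Once this is in hand, the hypothesis $\jsem{}{c}{\ctype{\booltype}{\nil}{\nil}} \rel \inval{(\jsem{}{v}{\booltype})}$ collapses to the equation $\jsem{}{c}{\ctype{\booltype}{\nil}{\nil}} = \inval{(\jsem{}{v}{\booltype})}$, and Lemma~\ref{lem:adequacy} yields $c \leadsto^{*} \return{v}$ at once.

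First I would unfold the denotation of the type. By the definition of $\sem{\ctypebasic}$ we have $\sem{\ctype{\booltype}{\nil}{\nil}} = \sem{\nil}\sem{\booltype}$, and since the empty signature contributes no operation constructors, the inductively defined set $\sem{\nil}\sem{\booltype}$ consists solely of the elements $\inval(b)$ for $b \in \sem{\booltype} = \set{\semFalse, \semTrue}$. In particular, every element of this set is of the form $\inval(b)$ for a unique boolean $b$.

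Next I would check that the relation $\rel_{\ctype{\booltype}{\nil}{\nil}}$, defined as the smallest symmetric and transitive relation closed under the three generating rules, is the identity. The operation rule is vacuous because $\sig = \nil$ provides no operations, and the equation rule is vacuous because $\eqE = \nil$ contains no equations; only the first rule fires, and since $\rel_{\booltype}$ is the identity it contributes precisely the reflexive pairs $\inval(b) \rel \inval(b)$. Taking the symmetric and transitive closure of a set of reflexive pairs adds nothing new, so the resulting relation is exactly equality.

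The only point requiring a moment's care, and hence the closest thing to an obstacle, is confirming that forming the symmetric-transitive closure does not inadvertently identify $\inval(\semTrue)$ with $\inval(\semFalse)$; this holds because no generating rule ever relates two distinct elements, so the closure cannot introduce any off-diagonal pair. With $\rel_{\ctype{\booltype}{\nil}{\nil}}$ thus shown to be equality, the reduction to Lemma~\ref{lem:adequacy} is immediate and completes the argument.
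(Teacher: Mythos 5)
Your proposal is correct and follows exactly the route the paper intends: the paper offers no written proof beyond the remark that $\rel_{\ctype{\booltype}{\nil}{\nil}}$ is the identity relation, after which the statement reduces to Lemma~\ref{lem:adequacy}. You have simply filled in the (easy) verification that the empty signature and empty theory make the generated relation the diagonal, which is precisely the adaptation the authors had in mind.
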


Generalizing Corollary~\ref{cor:eq_adequacy} is trickier, however. Recall that the typing relation depends on the information about which handlers respect which effect theories. If we state that all handlers respect all theories, we can quickly produce a counterexample of a handler that maps equivalent computations into non-equivalent ones. In order for our relations to make sense, the reasoning logic~$\logic$ needs to respect effect theories.

\subsection{Soundness of a logic}

\begin{definition}\label{def:sound-logic}
	A logic~$\logic$ is \emph{sound} if $(\respects{\hcases}{\eqE}{\sig}{\ctyD})$ implies that for any
	\[
		\teqjudgement{(x_i \oftype \tyA_i)_i; (z_j \oftype	\tyB_j \to \anytype)_j}{T_1 \rel T_2} \in \eqE,
	\]
	any $\eta \rel_{\ctx} \eta'$, any $a_i \rel_{\tyA_i} a'_i$, and any $f_j \rel_{\tyB_j \to \ctyD} f'_j$, we have:
	\[
		\hsem[\sem{h}\eta]{\wftemplate{T_1}{\sig}}((a_i)_i; (f_j)_j)
		\rel_{\ctyD}
		\hsem[\sem{h}\eta']{\wftemplate{T_2}{\sig}}((a'_i)_i; (f'_j)_j)
	\]
\end{definition}

Not every logic is sound. For example, take any logic that contains
\[
	\respects[]{h_{\mathit{pickLeft}}}{\set{\eqref{comm}}}{\set{\mathit{choose}}}{(\ctype{\inttype}{\emptyset}{\emptyset})}
\]
where $h_{\mathit{pickLeft}}$ are the operation clauses of $\mathit{pickLeft}$ handler, and take $f_1 = f'_1 = \lambda\star. \inval(1)$ while $f_2 = f'_2 = \lambda\star. \inval(2)$. We end up with $\inval(1)$ and $\inval(2)$ which are not equivalent under $\rel_{\ctype{\inttype}{\emptyset}{\emptyset}}$.

\begin{proposition}\label{prop:sound-logic}
	If a logic~$\logic$ is sound, then for any $\eta \rel_{\ctx} \eta'$, we have:
	\begin{itemize}
		\item $\judgement{\ctx}{v}{\tyA}$ implies $\jsem{\ctx}{v}{\tyA}\eta \rel_\tyA \jsem{\ctx}{v}{\tyA}\eta'$.
		\item $\judgement{\ctx}{c}{\ctyC}$ implies $\jsem{\ctx}{c}{\ctyC}\eta \rel_{\ctyC} \jsem{\ctx}{c}{\ctyC}\eta'$.
	\end{itemize}	
\end{proposition}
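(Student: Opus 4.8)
The plan is to prove both statements simultaneously by induction on the typing derivations of $\judgement{\ctx}{v}{\tyA}$ and $\judgement{\ctx}{c}{\ctyC}$, showing at each step that related environments are mapped to related denotations. Almost all cases amount to unfolding the defining clause of the relevant relation and appealing to the induction hypotheses. For a variable, $\jsem{\ctx}{x_i}{\tyA_i}\eta = \eta_i$ and $\jsem{\ctx}{x_i}{\tyA_i}\eta' = \eta'_i$ are related because $\eta \rel_\ctx \eta'$ holds componentwise, and for the constants the denotation is independent of the environment while $\rel_\unittype$ and $\rel_\booltype$ are identities. For a function $\fun{x}{c}$ I would unfold $\rel_{\tyA \to \ctyC}$, take $a \rel_\tyA a'$, note $(\eta, a) \rel_{\ctx, x \oftype \tyA} (\eta', a')$, and apply the induction hypothesis to $c$; application, handling, and the conditional are dual unfoldings of the function-, handler-, and boolean relations using the hypotheses on the subterms, the last relying on $\rel_\booltype$ being the identity so that both environments select the same branch.

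On the computation side, $\return{v}$ and $\operation{v}{y.c}$ are discharged directly by the first two generating rules of $\rel_\ctypebasic$ together with the induction hypotheses on their subterms. For sequencing, the induction hypotheses give $\sem{c_1}\eta \rel_\ctypebasic \sem{c_1}\eta'$ and, after unfolding the function-type relation, relatedness of the two continuations obtained from $c_2$; since the sequencing denotation is exactly the lifting of this continuation along the free interpretation $\freeInterp{\sem{\tyA}}{\sig}$, Lemma~\ref{lem:lift-preserves-rel} delivers the result.

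The only genuinely hard case is the handler value $\handler{\hcases}$, whose denotation is the lifting $\lift[\sem{\hcases}\eta]{(\lambdafun{a}{\sem{\tyA}}{\sem{c_r}(\eta, a)})}$ of its return clause along the interpretation induced by the operation clauses. To establish $\rel_{\ctypebasic \hto \ctyD}$ I must show that, whenever $t \rel_\ctypebasic t'$, the liftings at $\eta$ and $\eta'$ send $t$ and $t'$ to $\rel_\ctyD$-related values. The obstacle is that $\rel_\ctypebasic$ is the \emph{smallest symmetric and transitive} relation closed under its three generating rules, yet the two liftings use the \emph{different} (though related) environments $\eta$ and $\eta'$, so the obvious target predicate is not visibly symmetric. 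I would resolve this by showing $\rel_\ctypebasic \subseteq R$, where $R$ consists of those pairs $(t, t')$ sent to related values by the liftings \emph{for every} pair of related environments. Because all the relations $\rel_\tyA, \rel_\ctyC$ are symmetric and transitive, hence partial equivalence relations, this $R$ is itself symmetric and transitive, so by minimality it suffices to verify that $R$ is closed under the three generating rules.

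Closure under the $\inval$ and $\inop$ rules follows from the induction hypotheses on the return clause $c_r$ and on the operation clauses $c_\op$, which are subderivations of the handler typing, after rewriting the liftings through the recursive clauses defining $\mathsf{lift}$ and using the rule-induction hypothesis to relate the captured continuations. The equation-closure rule is the step that forces the soundness assumption: by Lemma~\ref{lem:handler_unroll}, applying a lifting to a template instantiation $\hsem[\freeInterp{\sem{\tyA}}{\sig}]{\wftemplate{\tmplT_1}{\sig}}$ rewrites it as $\hsem[\sem{\hcases}\eta]{\wftemplate{\tmplT_1}{\sig}}$ with the argument functions post-composed by the lifting, and the premise $(\respects{\hcases}{\eqE}{\sig}{\ctyD})$ of the handler rule, read through Definition~\ref{def:sound-logic}, yields precisely that the two interpreted templates are $\rel_\ctyD$-related. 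The main difficulty is therefore localised entirely in this handler case: setting up the nested induction so that the symmetric--transitive closure is respected, and invoking soundness at the single equation-closure step where the structural rules alone cannot conclude.
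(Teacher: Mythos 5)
Your proposal is correct and follows essentially the same route as the paper: structural induction on the typing derivation, with sequencing discharged by Lemma~\ref{lem:lift-preserves-rel} and the handler case by an inner induction on $t \rel_{\ctyC} t'$ that invokes Lemma~\ref{lem:handler_unroll} and the soundness assumption (Definition~\ref{def:sound-logic}) at the equation-closure step. Your explicit treatment of the symmetric--transitive closure via the auxiliary relation $R$ and the PER argument is in fact more careful than the paper's proof, which simply performs the inner induction on the three generating rules without commenting on symmetry and transitivity.
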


\begin{proof}
	The majority of the proof proceeds with structural induction on the typing derivation, with the only non-trivial cases being sequencing and handlers.

	Recall the denotation of sequencing:
	\[
		\jsem{\ctx}{\dobind{x}{c_1}{c_2}}{\ctyD}\eta =
		\lift[\freeInterp{\sem{\tyA}}{\sig}]{(\lambdafun{a}{\sem{\tyA}}{\sem{c_2}{(\eta, a)}})}(\sem{c_1}\eta).
	\]
	By induction we know that $\sem{c_1}\eta \rel_{\ctyC} \sem{c_1}\eta'$ and $\sem{c_2}{(\eta, a)} \rel_{\ctyD} \sem{c_2}{(\eta', a')}$ for any $a \rel_\tyA a'$, which further implies that $(\lambdafun{a}{\sem{\tyA}}{\sem{c_2}{(\eta, a)}}) \rel_{\tyA \to \ctyD} (\lambdafun{a}{\sem{\tyA}}{\sem{c_2}{(\eta', a)}})$. We conclude by applying Lemma~\ref{lem:lift-preserves-rel}, which states that lifts of related functions are related.

	The other interesting case in the proof are the handlers. Recall the definition for handlers:
	\begin{equation*}
		\jsem{\ctx}{\handler \hcases}{\ctypebasic \hto \ctyD}\eta = \lift[\sem{\hcases}(\eta)]{(\lambdafun{a}{\sem{\tyA}}{\sem{c_r}(\eta, a)})}
	\end{equation*}
	where we define the meaning of handler cases $\jsem{\ctx}{\hcases}{\sig \casesto \ctyD}(\eta) \in \interp{\sig}{\sem{\ctyD}}$ as the family of functions:
	\[{\big \{
		\lambdafun{a}{\sem{\tyA_\op}}{
			\lambdafun{\kappa}{\sem{\tyB_\op \to \ctyD}}{
				\sem{c_\op}(\eta, a, \kappa)
			}}
		\big \}}_{\op{} \; \oftype \; \tyA_\op \to \tyB_\op \in \sig}
	\]
	We abbreviate the the denotation of the handler as $\semcases(\eta) := \lift[\sem{\hcases}(\eta)]{(\lambdafun{a}{\sem{\tyA}}{\sem{c_r}(\eta, a)})}$. We must prove that $\semcases(\eta)$ satisfies the requirements for the relation on handler types, i.e.\  for any $t \rel_{\ctyC} t'$ it must hold that $\semcases(\eta)(t) \rel_{\ctyD} \semcases(\eta')(t')$. 
	
	Because $\rel_{\ctyC}$ is the smallest relation closed under the given rules, we proceed by induction on $t \rel_{\ctyC} t'$. The cases where the relation is structural are largely the same as in proof of Lemma~\ref{lem:lift-preserves-rel}. The interesting case is the case of equivalence due to equations.

	Suppose that the relation $t \rel_{\ctyC} t'$ arises from the equation $\teqjudgement{\tenv}{\tmplT_1 \rel \tmplT_2} \in \eqE$, where we denote $\ctx = (x_i \oftype \tyA_i)_i$ and $\tvars = (z_j \oftype	\tyB_j \to \anytype)_j$. That means that there exist $a_i \rel_{\tyA_i} a'_i$ and $f_j \rel_{\tyB_j \to \ctyC} f'_j$ for which
		\[
			t = \sem{\tmplT_1}((a_i)_i, (f_j)_j) 
			\rel
			t' = \sem{\tmplT_2}((a'_i)_i, (f'_j)_j).
		\]

	By Lemma~\ref{lem:handler_unroll} we have the equality: 
	\begin{align*}
	\semcases(\eta)(\sem{\tmplT_1}_{\ctyC}((a_i)_i,(f_j)_j)) = \hsem[\sem{h}\eta]{\tmplT_1}\big((a_i)_i;(\semcases(\eta) \circ f_j)_j\big)
	\end{align*}
	and similarly for $\tmplT_2$.

	By induction hypotheses we get $\semcases(\eta) \circ f_j \rel_{\tyB_j \to \ctyD} \semcases(\eta') \circ f_j'$, so by the assumption on logic soundness, we have by Definition~\ref{def:sound-logic}:
	\[
		\hsem[\sem{h}\eta]{\wftemplate{T_1}{\sig}}((a_i)_i; (\semcases(\eta) \circ f_j)_j)
		\rel_{\ctyD}
		\hsem[\sem{h}\eta']{\wftemplate{T_2}{\sig}}((a'_i)_i; (\semcases(\eta') \circ f'_j)_j)
	\]
	which concludes our proof.
\end{proof}

\begin{theorem}\label{theorem:rel-ctxequiv}
	Assume a logic~$\logic$ is sound. Then, we have:
	\begin{itemize}
		\item if $\jsem{\ctx}{c}{\ctyC}\eta \rel_{\ctyC} \jsem{\ctx}{c'}{\ctyC}\eta'$ holds for any $\eta \rel_{\ctx} \eta'$, then $\ctxequiv{c}{c'}{\ctyC}$;
		\item if $\jsem{\ctx}{v}{\tyA}\eta \rel_{\tyA} \jsem{\ctx}{v'}{\tyA}\eta'$ holds for any $\eta \rel_{\ctx} \eta'$, then $\ctxequiv{v}{v'}{\tyA}$.
	\end{itemize}
\end{theorem}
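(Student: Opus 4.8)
The plan is to adapt the folklore argument behind Corollary~\ref{cor:eq_adequacy}, replacing its appeal to structurality (equal denotations of the plugged terms yield equal denotations of the whole program) with the statement that the family of relations~$\rel$ is a \emph{congruence} (related denotations of the plugged terms yield related denotations of the whole program). The pivotal observation is that on the type $\ctype{\booltype}{\nil}{\nil}$ the relation is the identity, so relatedness there collapses to equality and can be handed to adequacy. For the computation case I would fix a ground computation context~$\CEvar$ for $\ctx$ and $\ctyC$ and establish both directions of $\CEinstantiate{\CEvar}{c} \leadsto^* \return{\true} \iff \CEinstantiate{\CEvar}{c'} \leadsto^* \return{\true}$. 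Since every $\rel$ is symmetric (a routine induction over types: the unit and boolean relations are identities, the function and handler relations inherit symmetry from their components, and $\rel_{\ctyC}$ is symmetric by construction), it suffices to treat one direction, the other following by swapping $c$ and $c'$ and reusing symmetry of $\rel_\ctx$.

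So assume $\CEinstantiate{\CEvar}{c} \leadsto^* \return{\true}$. By Proposition~\ref{prop:soundness-operational} we obtain $\sem{\CEinstantiate{\CEvar}{c}} = \inval(\semTrue)$. The heart of the argument is the congruence claim
\[
	\sem{\CEinstantiate{\CEvar}{c}} \rel_{\ctype{\booltype}{\nil}{\nil}} \sem{\CEinstantiate{\CEvar}{c'}},
\]
which, because $\rel_{\ctype{\booltype}{\nil}{\nil}}$ is the identity, forces $\sem{\CEinstantiate{\CEvar}{c'}} = \inval(\semTrue)$; Lemma~\ref{lem:better_adequacy} (equivalently Lemma~\ref{lem:adequacy}) then yields $\CEinstantiate{\CEvar}{c'} \leadsto^* \return{\true}$, as required. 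The value case is handled identically, using the corresponding notion of ground value context.

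To prove the congruence claim I would induct on the structure of~$\CEvar$. At the hole itself, the needed relatedness is exactly the hypothesis $\sem{c}\eta \rel_{\ctyC} \sem{c'}\eta'$ for all $\eta \rel_{\ctx} \eta'$; when the hole lies under a binder the surrounding context extends~$\ctx$, and I first weaken this hypothesis to the larger context, which is immediate since $c$ and $c'$ do not mention the freshly bound variables and $\rel$ on an extended context projects component-wise onto $\rel_\ctx$. At each context former enclosing the hole I invoke the matching compatibility property of~$\rel$, that related inputs are sent to related outputs, while the fixed surrounding subterms of~$\CEvar$ are related to themselves by the reflexive instance, i.e.\ by Proposition~\ref{prop:sound-logic} proper. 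The point I would stress is that these compatibility properties are precisely the inductive cases already carried out in the proof of Proposition~\ref{prop:sound-logic}: that proof never uses that the two sides are syntactically identical, only that their denotations are related, so it in fact establishes the heterogeneous congruence we need here.

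The main obstacle is exactly this congruence step, and within it the two nonstructural context formers inherited from Proposition~\ref{prop:sound-logic}: sequencing, whose compatibility rests on Lemma~\ref{lem:lift-preserves-rel} (lifts of related functions are related), and handling, whose compatibility rests on the soundness of the logic~$\logic$ via Definition~\ref{def:sound-logic} together with the unrolling identity of Lemma~\ref{lem:handler_unroll}. Because a ground context may itself contain handlers, it is precisely the soundness hypothesis on~$\logic$ that makes the whole argument go through; without it the congruence genuinely fails, as witnessed by the $\mathit{pickLeft}$ counterexample following Definition~\ref{def:sound-logic}.
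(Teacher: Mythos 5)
Your proposal is correct and follows essentially the same route as the paper: reduce to adequacy (Lemma~\ref{lem:better_adequacy}) via the observation that $\rel_{\ctype{\booltype}{\nil}{\nil}}$ is the identity, and establish the key congruence step $\sem{\CEinstantiate{\CEvar}{c}} \rel \sem{\CEinstantiate{\CEvar}{c'}}$ by rerunning the induction behind Proposition~\ref{prop:sound-logic} in heterogeneous form, with soundness of~$\logic$ carrying the handler case. Your write-up is in fact more explicit than the paper's sketch about the induction on the context, the weakening under binders, and the appeal to symmetry for the reverse direction.
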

\begin{proof}
	The proof is an adaptation of Corollary~\ref{cor:eq_adequacy}. First, assume that $\CEinstantiate{\CEvar}{c} \leadsto^* \return{\true}$. By Proposition~\ref{prop:sound-logic}, we have $\sem{\CEinstantiate{\CEvar}{c}} \rel \sem{\CEinstantiate{\CEvar}{c}}$, and by Proposition~\ref{prop:soundness-operational}, we have $\sem{\CEinstantiate{\CEvar}{c}} = \inval(\semTrue)$. Using a inductive proof similar to one for Proposition~\ref{prop:sound-logic}, we can show that the denotational semantics is structural with respect to equivalence and so $\sem{\CEinstantiate{\CEvar}{c}} \rel \sem{\CEinstantiate{\CEvar}{c'}}$. We conclude by applying Lemma~\ref{lem:better_adequacy}. The proof for values is identical.
\end{proof}

\subsection{Examples of sound logics}

The logics~$\emptylogic$, $\freelogic$ and $\fulllogic$, presented in Section~\ref{sec:logics} are indeed sound. Next, consider the equational logic, given in~\ref{sub:equational-logic}. In addition to soundness, we see that all terms shown to be equivalent under the logic are also equivalent with respect to the semantics.

\begin{theorem}\label{theorem:equational-logic-sound}
	The equational logic~$\eqlogic$ is sound and we have
	\begin{itemize}
		\item If $\ljudgement{\ctx}{}{v \lequal_{\tyA} v'}$ then $\jsem{\ctx}{v}{\tyA}\eta \rel_\tyA \jsem{\ctx}{v'}{\tyA}\eta'$.
		\item If $\ljudgement{\ctx}{}{c \lequal_{\ctyC} c'}$ then $\jsem{\ctx}{c}{\ctyC}\eta \rel_{\ctyC} \jsem{\ctx}{c'}{\ctyC}\eta'$.
	\end{itemize}
\end{theorem}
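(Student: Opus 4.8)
The plan is to prove both claims by a single simultaneous induction on derivations of $\eqlogic$, which freely mix the equational judgements $\eqjudg{v}{v'}{\tyA}$, $\eqjudg{c}{c'}{\ctyC}$ and the well-definedness judgement $\respects{\hcases}{\eqE}{\sig}{\ctyD}$. For the latter I establish exactly the condition of Definition~\ref{def:sound-logic} (so that $\eqlogic$ is sound), and for the former the two displayed $\rel$-relatednesses. Before starting I would record two routine facts: a \emph{semantic substitution lemma} $\sem{\subs{c}{v \subsfor x}}\eta = \sem{c}(\eta, \sem{v}\eta)$ (and its value analogue), already implicit in Proposition~\ref{prop:soundness-operational}; and the fact that every $\rel_\tyA$ and $\rel_\ctyC$ is a partial equivalence relation, since symmetry and transitivity are built into $\rel_\ctyC$, the base types carry identities, and the function and handler clauses are the usual logical-relation clauses, which preserve the PER property. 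Symmetry and transitivity then transfer pointwise to $\rel_\ctx$ on environments.

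The equational rules split into three groups. For reflexivity, the congruences, substitution and the $\beta\eta$-laws, the two sides have \emph{equal} denotations — this is where the substitution lemma enters, e.g.\ $\sem{\apply{(\fun{x}{c})}{v}}\eta = \sem{c}(\eta, \sem{v}\eta) = \sem{\subs{c}{v\subsfor x}}\eta$ — so the required $\sem{\text{lhs}}\eta \rel \sem{\text{rhs}}\eta'$ reduces to $\sem{\text{lhs}}\eta \rel \sem{\text{lhs}}\eta'$, which is exactly Proposition~\ref{prop:sound-logic}. Symmetry and transitivity of the logic are discharged using the PER property (for transitivity one first derives $\eta \rel \eta$ and $\eta' \rel \eta'$ from $\eta \rel \eta'$ and inserts the intermediate term at $\eta'$). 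The one genuinely new rule is the inheritance rule of Figure~\ref{fig:equational_logic}, for which I would prove a coherence lemma, by induction on the template, stating that the denotation of the instantiated template equals its interpretation under the free interpretation,
\[
	\sem{\subs{(\subs{\tmplT}{f_j \subsfor z_j}_j)}{v_i \subsfor x_i}_i}\eta
	=
	\hsem[\freeInterp{\sem{\tyA}}{\sig}]{\wftemplate{\tmplT}{\sig}}\big((\sem{v_i}\eta)_i; (\sem{f_j}\eta)_j\big),
\]
the only interesting case being the operation call, which matches because $\freeInterp{\sem{\tyA}}{\sig}$ acts by $\inop$. Combined with the third defining clause of $\rel_\ctyC$ (which reflects precisely the equations of $\eqE$) and Proposition~\ref{prop:sound-logic} applied to each $v_i$ and $f_j$ to obtain $\sem{v_i}\eta \rel \sem{v_i}\eta'$ and $\sem{f_j}\eta \rel \sem{f_j}\eta'$, this yields the desired relatedness at $\ctypebasic$.

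For the well-definedness judgement, the empty-theory rule is immediate since Definition~\ref{def:sound-logic} then quantifies over an empty set of equations. For the inductive rule, the soundness conditions for the equations already in $\eqE$ follow from the induction hypothesis applied to the first premise $\respects{\hcases}{\eqE}{\sig}{\ctyD}$, so only the freshly added equation needs attention. Here I would prove the analogue of the coherence lemma for \emph{handled} templates,
\[
	\sem{\thandle{\tmplT}{f_j \subsfor z_j}_j}\eta
	=
	\hsem[\sem{\hcases}\eta]{\wftemplate{\tmplT}{\sig}}\big(\ldots; (\sem{f_j}\eta)_j\big),
\]
again by induction on $\tmplT$, the operation-call case unfolding the substitution $\subs{c_\op}{v \subsfor x, (\fun{y}{\cdot}) \subsfor \contin}$ and matching it against the clause $(\sem{\hcases}\eta)_\op$ via the substitution lemma. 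The equational premise, through the part~(B) induction hypothesis, gives $\sem{\thandle{\tmplT_1}{f_j\subsfor z_j}_j}\rho \rel_\ctyD \sem{\thandle{\tmplT_2}{f_j\subsfor z_j}_j}\rho'$ for $\rho \rel \rho'$ ranging over the extended context $\ctx, (x_i \oftype \tyA_i)_i, (f_j \oftype \tyB_j \to \ctyD)_j$; rewriting both sides by the coherence lemma turns this into exactly the inequality demanded by Definition~\ref{def:sound-logic}, the context quantification matching the quantification over $a_i \rel a'_i$ and $f_j \rel f'_j$.

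The main obstacle is organising the simultaneous induction so that the appeals to Proposition~\ref{prop:sound-logic} are legitimate: that proposition presupposes a sound logic, yet soundness is precisely what we are establishing. The resolution is that the only logic-sensitive step in the proof of Proposition~\ref{prop:sound-logic} is the handler case, which appeals to Definition~\ref{def:sound-logic} for the handler's own $\respects$ derivation; since every such derivation occurring in the terms of an equational rule is a strictly smaller sub-derivation, its soundness is available as an induction hypothesis, so no circularity arises. The remaining delicate points are the two coherence lemmas, whose operation-call cases must be threaded carefully through the definitions of template instantiation, $\thandle{\cdot}{\cdot}$, and $\hsem[\cdot]{\cdot}$.
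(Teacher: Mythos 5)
Your proof is correct and follows essentially the same route as the paper's: a simultaneous induction over the equational, typing, and $\respects{\hcases}{\eqE}{\sig}{\ctyD}$ derivations that re-runs the argument of Proposition~\ref{prop:sound-logic} rather than invoking it as a black box, with the inheritance rule discharged by the equation clause of $\rel_{\ctypebasic}$ and the $\respects{}{}{}{}$ extension rule by matching the handled-template denotation against Definition~\ref{def:sound-logic}. The paper's proof is a two-sentence sketch of exactly this; your coherence lemmas and your explicit resolution of the apparent circularity are the details it leaves implicit.
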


\begin{proof}
	The proof proceeds by induction on the derivation of typing and equality judgements. Note that we cannot simply apply Proposition~\ref{prop:sound-logic} because of the mutual dependence between typing judgements and equality judgements, however we can follow the structure of its proof. Most cases are immediate apart from inheritance from the effect theory, which amounts exactly to the definition of $\rel_{\ctypebasic}$, and handler validation, which follows from the assumption that~$\eqlogic$ is sound.
\end{proof}

Composing Theorems~\ref{theorem:rel-ctxequiv} and~\ref{theorem:equational-logic-sound} we then get:

\begin{corollary}\label{cor:lequal-to-ctxequiv}
	If $\eqjudg{v_1}{v_2}{\tyA}$, then $\ctxequiv{v_1}{v_2}{\tyA}$.
	If $\eqjudg{c_1}{c_2}{\ctyC}$, then $\ctxequiv{c_1}{c_2}{\ctyC}$.
\end{corollary}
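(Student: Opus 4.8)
The plan is to prove the corollary by chaining the two preceding results, handling the value and computation cases in parallel since they are structurally identical. First I would record that Theorem~\ref{theorem:equational-logic-sound} asserts, among other things, that the equational logic~$\eqlogic$ is \emph{sound} in the sense of Definition~\ref{def:sound-logic}. This is precisely the standing hypothesis that Theorem~\ref{theorem:rel-ctxequiv} imposes on a logic~$\logic$, so both theorems become simultaneously applicable to~$\eqlogic$. The genuinely substantive content — that handler validation in~$\eqlogic$ entails the soundness condition — has already been discharged in the proof of Theorem~\ref{theorem:equational-logic-sound}, so the present corollary is a bookkeeping step.

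Next, suppose we are given a derivation of $\eqjudg{c_1}{c_2}{\ctyC}$ in~$\eqlogic$. By the computation clause of Theorem~\ref{theorem:equational-logic-sound}, for every pair of related environments $\eta \rel_\ctx \eta'$ we obtain $\jsem{\ctx}{c_1}{\ctyC}\eta \rel_{\ctyC} \jsem{\ctx}{c_2}{\ctyC}\eta'$. This is exactly the premise of the first item of Theorem~\ref{theorem:rel-ctxequiv}, and since $\eqlogic$ is sound, that theorem yields $\ctxequiv{c_1}{c_2}{\ctyC}$. The value case is obtained identically: a derivation of $\eqjudg{v_1}{v_2}{\tyA}$ gives $\jsem{\ctx}{v_1}{\tyA}\eta \rel_\tyA \jsem{\ctx}{v_2}{\tyA}\eta'$ for all $\eta \rel_\ctx \eta'$ by the value clause of Theorem~\ref{theorem:equational-logic-sound}, whence the second item of Theorem~\ref{theorem:rel-ctxequiv} delivers $\ctxequiv{v_1}{v_2}{\tyA}$.

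The argument is a pure composition, so I anticipate no real obstacle; the single point worth verifying is that the quantification over related environments lines up between the two theorems. Theorem~\ref{theorem:equational-logic-sound} produces the relation $\jsem{\ctx}{c_1}{\ctyC}\eta \rel_{\ctyC} \jsem{\ctx}{c_2}{\ctyC}\eta'$ for an arbitrary $\eta \rel_\ctx \eta'$, while Theorem~\ref{theorem:rel-ctxequiv} consumes exactly such a ``for any $\eta \rel_\ctx \eta'$'' statement as its hypothesis, so the two interfaces match verbatim and no reindexing or strengthening is required.
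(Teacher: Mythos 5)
Your proposal is correct and is exactly the paper's argument: the corollary is obtained by composing Theorem~\ref{theorem:equational-logic-sound} (which supplies the semantic relatedness for all $\eta \rel_\ctx \eta'$ and establishes that $\eqlogic$ is sound) with Theorem~\ref{theorem:rel-ctxequiv}. The interface check you note at the end is the only point of substance, and it goes through as you say.
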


In order to show that~$\predlogic$ is sound, we need to define the denotation of formulae. We interpret each formula $\varphi$ in a context $\ctx$ as a relation on $\sem{\ctx}$, defined by:

\begin{align*}
	\eta \sem{v_1 \lequal_\tyA v_2} \eta' & \iff \jsem{\ctx}{v_1}{\tyA}{\eta} \rel_{\ctyC} \jsem{\ctx}{v_2}{\tyA}{\eta'} \\
	\eta \sem{c_1 \lequal_{\ctyC} c_2} \eta' & \iff \jsem{\ctx}{c_1}{{\ctyC}}{\eta} \rel_{\ctyC} \jsem{\ctx}{c_2}{{\ctyC}}{\eta'} \\
	\eta \sem{\ltrue} \eta'                        &\iff \eta \rel_\ctx \eta'                              \\
	\eta \sem{\lfalse} \eta'                       &\iff \lfalse                             \\
	\eta \sem{\varphi_1 \land \varphi_2} \eta'           &\iff (\eta \sem{\varphi_1} \eta') \land (\eta \sem{\varphi_2} \eta') \\
	\eta \sem{\varphi_1 \lor \varphi_2} \eta'            &\iff (\eta \sem{\varphi_1} \eta') \lor (\eta \sem{\varphi_2} \eta') \\
	\eta \sem{\varphi \limplies \psi} \eta'       &\iff (\eta \sem{\varphi} \eta') \limplies (\eta \sem{\psi} \eta') \\
	\eta \sem{\lall{x \oftype \tyA} \varphi} \eta'    &\iff \lall{a \rel_\tyA a'} (\eta, a) \sem{\varphi} (\eta', a') \\
	\eta \sem{\lexists{x \oftype \tyA} \varphi} \eta'    &\iff \lexists{a \rel_\tyA a'} (\eta, a) \sem{\varphi} (\eta', a')
\end{align*}

Like in proof of Theorem~\ref{theorem:equational-logic-sound}, we proceed by an induction on the judgement derivation and show soundness of~$\predlogic$. Most cases are identical to~$\eqlogic$ or follow from the defining properties of logical connectives and quantifiers. Soundness of induction follows straight from the inductive structure of computation types~\cite{DBLP:conf/lics/PlotkinP08, DBLP:phd/ethos/Pretnar10, DBLP:journals/corr/BauerP13}.

\begin{theorem}
	The equational logic~$\predlogic$ is sound and if $\ljudgement{\ctx}{\psi_1, \dots, \psi_n}{\varphi}$ holds, then for any $\eta$ and $\eta'$ such that $\eta \sem{\psi_1 \land \cdots \land \psi_n} \eta'$, we have $\eta \sem{\varphi} \eta'$.
\end{theorem}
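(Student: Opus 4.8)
The plan is to run a single simultaneous induction on the derivations of \emph{all} judgement forms of $\predlogic$ at once: the value and computation typing judgements, the respects judgement $\respects{\hcases}{\eqE}{\sig}{\ctyD}$, and the hypothetical formula judgements $\ljudgement{\ctx}{\psi_1,\dots,\psi_n}{\varphi}$. Exactly as flagged for Theorem~\ref{theorem:equational-logic-sound}, we cannot invoke Proposition~\ref{prop:sound-logic} as a black box, since typing depends on respects, which in turn depends—through its extension rule—on provable equalities; so the fundamental lemma is reproved inline as part of the same induction. Concretely, the induction simultaneously establishes three mutually dependent statements: (i) for well-typed $v$ and $c$, related environments yield related denotations (the content of Proposition~\ref{prop:sound-logic}); (ii) every derivable respects judgement satisfies the condition of Definition~\ref{def:sound-logic}, which is the soundness half of the claim; and (iii) every derivable $\ljudgement{\ctx}{\hypo}{\varphi}$ is valid, i.e.\ $\eta\sem{\psi_1\land\cdots\land\psi_n}\eta'$ implies $\eta\sem{\varphi}\eta'$, which is the second half.

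For the bulk of the cases there is nothing new to do. The typing rules are handled exactly as in Proposition~\ref{prop:sound-logic}, the handler case appealing to (ii) and to Lemmas~\ref{lem:handler_unroll} and~\ref{lem:lift-preserves-rel}; the respects rules are handled as in the proof that $\eqlogic$ is sound, the extension rule consuming a provable computation equality supplied by (iii). The equational and congruence rules, together with inheritance from the effect theory, are dispatched as in Theorem~\ref{theorem:equational-logic-sound}, the inheritance rule amounting precisely to the third clause defining $\rel_{\ctypebasic}$. The propositional connectives and quantifiers are validated pointwise: because each connective in the relational semantics of formulae is interpreted by its meta-level counterpart (conjunction by conjunction, implication by implication, $\fall{x \oftype \tyA}$ by quantification over related pairs $a\rel_\tyA a'$, and so on), the standard natural-deduction rules of first-order logic are sound for this interpretation, and the hypotheses $\hypo$ are threaded through in the usual way.

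The one genuinely new case, and the main obstacle, is the principle of induction. Here I would fix environments satisfying the hypotheses and argue by structural induction on the inductively defined carrier $\sem{\ctypebasic}=\sem{\sig}\sem{\tyA}$, exploiting that its elements are generated solely by the constructors $\inval$ and $\inop$. Applying (i) to the typing premise for $c$ places the pair $\sem{c}\eta \rel_{\ctyC} \sem{c}\eta'$ in the field of the relation; the return premise then validates the $\inval$ generator (matching $\return{x}$), while the premise for each $\op\in\sig$ validates the corresponding $\inop$ generator (matching $\operation{x}{y.\apply{k}{y}}$), with the quantified logical induction hypothesis $\fall{y \oftype \tyB_\op}\varphi(\apply{k}{y})$ interpreted exactly as the set-theoretic induction hypothesis on a continuation $\kappa\rel_{\tyB_\op\to\ctyC}\kappa'$. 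The delicate point is the binary nature of the semantics: the schema $\varphi(c)$ refers to $c$ on both sides of the relation, so the structural induction must track the pair $(\sem{c}\eta,\sem{c}\eta')$ simultaneously rather than a single element, and one must check that the relational interpretation of $\varphi$ is indeed preserved as we descend through matching generators. Once the induction rule is in place, statement (iii) gives the second half of the theorem and statement (ii) gives its soundness in the sense of Definition~\ref{def:sound-logic}.
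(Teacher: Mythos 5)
Your proposal is correct and takes essentially the same route as the paper: the paper likewise proceeds by a single induction on the judgement derivation (mirroring Theorem~\ref{theorem:equational-logic-sound}, hence reproving the fundamental lemma inline rather than citing Proposition~\ref{prop:sound-logic}), dispatches the connectives and quantifiers via their meta-level counterparts in the relational semantics, and justifies the induction principle by the inductive structure of $\sem{\sig}\sem{\tyA}$. You in fact give more detail than the paper, which delegates the computational-induction case to citations of prior work; the binary-relation subtlety you flag there is genuine but is left unaddressed by the paper as well.
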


\section{Conclusion}\label{sec:conclusion}

\subsection{Related work}

\paragraph{Equational reasoning about monadic effects}
Simple equational reasoning is one of the better properties that pure functional programs enjoy. As suggested in the seminal paper on monads~\cite{DBLP:journals/iandc/Moggi91}, this approach can be extended to functional programs that use monadic effects. Further examples of such reasoning can be found in~\cite{DBLP:conf/icfp/GibbonsH11}. In addition to equations over pure programs, reasoning about effectful programs employs equations describing propagation of operation calls, monadic axioms, and equations describing primitive effectful operations. All three kinds of equations are fully supported in our approach (the first is an axiom, the second is derivable with induction, and the third corresponds to inheritance from effect theories), allowing us to use the same techniques with the additional flexibility of locally varying the effect theory.

\paragraph{Algebraic effects and dependent types}
Another approach to reason about effectful programs through richer type annotations is to employ dependent types~\cite{DBLP:conf/icfp/Brady13,Ahman:PhDThesis, DBLP:journals/pacmpl/Ahman18}. In fact,~\cite{DBLP:journals/pacmpl/Ahman18} is the only research work on handlers besides the original one that considers non-trivial effect theories. Its aim is similar to ours, though the implementation differs. The biggest difference is in the representation of handlers: dependent typing allows the operation clauses to be encoded in types, so any algebra for the theory has a matching type, whereas in our approach, only the free ones do.

\paragraph{Program optimisations}
Effect theories allow us to rewrite programs into equivalent but more efficient ones. For example, idempotence of a non-deterministic choice operation allows us to skip repeated computations, while commutativity allows us to change the order of evaluation, enabling further optimizations. A survey of such transformations can be found in~\cite{DBLP:conf/popl/KammarP12}. Even though the development was done in the context of a global effect theory, its results are easily adapted to our work. Though we understand that our work is far from reaching practical type-driven optimizations, we hope that it might prove to be useful in further development.

\subsection{Future work}\label{sub:future-work}

\paragraph{Language extensions}
For the sake of a clearer presentation, we have limited our working calculus to the smallest possible fragment that already exhibits the novel features of the type system. Extending the language with additional value types such as sums or products is simple, one just needs to take care to extend the template syntax with additional value destructors. Similarly, one can extend the language with recursive functions, though in this case one must switch the denotational semantics from the category of sets to one of domains and adapt the logics to divergence~\cite{DBLP:journals/corr/BauerP13}.

\paragraph{Subtyping}
A sensible extension is the addition of structural subtyping as in~\cite{DBLP:journals/corr/BauerP13, DBLP:conf/esop/SalehKPS18}. We only need to modify the rules for computation types since the addition of equations only affects computations. The simplest version is to allow $\ctype{\tyA}{\sig}{\eqE} \subtypeof \ctype{\tyA'}{\sig'}{\eqE'}$ whenever $\tyA \subtypeof \tyA'$, every operation in $\sig$ appears in $\sig'$ with a greater type, and when the theory $\eqE'$ entails all the equations in $\eqE$ (we may always consider more computations to be equivalent). The exact logic that allows such reasoning can be considered in future work. A simpler variant is to check that every equation in $\eqE$ appears in $\eqE'$ as well, though this approach is limited due to non-canonicity of the set of equations.

\paragraph{Formalization}
Similar to our previous work~\cite{DBLP:journals/corr/BauerP13, DBLP:journals/jfp/KammarP17, DBLP:journals/pacmpl/0002KLP17, DBLP:conf/esop/SalehKPS18}, we wish to mechanize our formalization in a proof assistant. From past experience, we expect the proof of Theorem~\ref{theorem:safety} to proceed smoothly, but expect bigger problems with proofs that depend on denotational semantics (e.g.\ Corollary~\ref{cor:lequal-to-ctxequiv}). For that reason, we plan to look at purely syntactical treatments of contextual equivalence~\cite{DBLP:conf/cpp/McLaughlinMS18} that are more amenable to mechanization.

\paragraph{Practical implementation}
We plan on implementing the proposed system in the Eff programming language~\cite{DBLP:journals/jlp/BauerP15}. The user would annotate computation types with the desired equations, and the system would ensure they are respected. There are multiple interesting implementations to consider, such as dispatching the proofs to an SMT solver~\cite{DBLP:conf/tacas/MouraB08}, generating proof assistant templates that a user must complete with proofs, or using a QuickCheck~\cite{DBLP:conf/icfp/ClaessenH00} like tool, used to detect errors by running the given handler on random examples generated from equations and comparing the results on both sides.

\paragraph{Polymorphism}
Another aspect to consider in the practical implementation is the interaction with the currently implemented polymorphic core language~\cite{DBLP:conf/esop/SalehKPS18} where one may consider functions that are polymorphic both in the type and signature of an operation, for example
\[
    \mathit{map} \oftype
    \forall \alpha, \beta, \sigma. (\alpha \to \ctype{\beta}{\sigma}{}) \to \ctype{(\alpha~\code{list} \to \ctype{\beta~\code{list}}{\sigma}{})}{\nil}{}
\]
Since we want such functions to preserve the equivalences between computations, it would be natural to consider polymorphism in all the components of a computation type, which would allow us to assign a type such as
\[
    \mathit{map} \oftype
    \forall \alpha, \beta, \sigma, \varepsilon. (\alpha \to \ctype{\beta}{\sigma}{\varepsilon}) \to \ctype{(\alpha~\code{list} \to \ctype{\beta~\code{list}}{\sigma}{\varepsilon})}{\nil}{}
\]

\paragraph{Additional examples}
All of the examples currently presented are kept minimal for clarity. By implementing the system we plan on producing larger and more complex examples, showcasing the usefulness of our proposed system. Extending the system with some form of subtyping will also make examples more composable.


\subsection*{Acknowledgements}

We would like to thank Danel Ahman, Bob Atkey, Andrej Bauer, Jeremy Gibbons, Ohad Kammar, Oleg Kiselyov, Gordon Plotkin, Alex Simpson, Tom Schrijvers, Nicolas Wu, all other participants of Dagstuhl seminars 16112~\cite{DBLP:journals/dagstuhl-reports/Bauer0PY16} and 18172~\cite{DBLP:journals/dagstuhl-reports/ChandrasekaranL18}, and the anonymous referees for all their helpful insights and suggestions.

\bibliographystyle{plain}
\bibliography{bibliography}

\end{document}